\tikzset{
	>=stealth',
	help lines/.style={dashed, thick},
	axis/.style={<->},
	important line/.style={thick},
	connection/.style={dotted},
}
\def\x{\mathbf{x}}
\def\0{\mathbf{0}}
\def\eps{\epsilon}
\def\gT{\mathcal{T}}
\def\lT{L}
\def\lC{L'}
\def\maxB{FBS }
\def\IP{(IP-FBS) }
\def\cC{\mathcal{C}}
\def\sj{^{(j)}}
\def\sw{^{(w)}}
\def\sjw{^{(j(w))}}
\def\dw{'^{(w)}}
\def\MFB{$2$-covariate BM problem}
\title{Algorithms and Complexity for Variants of Covariates Fine Balance }
\author{Dorit S. Hochbaum\thanks{Department of IEOR, Etcheverry Hall, Berkeley, CA, supported in part by NSF award No. CMMI-1760102.
		(\email{hochbaum@ieor.berkeley.edu}).}
	\and Asaf Levin\thanks{Faculty of Industrial Engineering and Management, The Technion, Haifa, Israel, supported in part by ISF - Israeli Science Foundation grant number 308/18.
		(\email{levinas@technion.ac.il}).}
	\and Xu Rao\thanks{Department of IEOR, Etcheverry Hall, Berkeley, CA, supported in part by NSF award No. CMMI-1760102.
		(\email{xrao@berkeley.edu}).}}
\begin{document}
	
	\maketitle
	
	\begin{abstract}
We study here several variants of the covariates fine balance problem where we generalize some of these problems and introduce a number of others.  We present here a comprehensive complexity study of the covariates problems providing polynomial time algorithms, or a proof of NP-hardness.  The polynomial time algorithms described are mostly combinatorial and rely on network flow techniques.  In addition we present several fixed-parameter tractable results for problems where the number of covariates and the number of levels of each covariate are seen as a parameter.
	\end{abstract}
	
	\begin{keywords}
		Algorithms; Complexity; Covariate balance; Observational studies.
	\end{keywords}
	

\section{Introduction}

The problem of balancing covariates arises in observational studies in various contexts such as statistics \cite{Rosenbaum02} \cite{Rubin06}, epidemiology \cite{Brookhart06}, sociology \cite{Morgan06}, economics \cite{Imbens04} and political science \cite{Ho07}. In an observational study there are two disjoint groups of samples, one  of treatment samples and the other of control samples.  Each of the samples in the two groups is characterized by several observed covariates, or features.

When estimating causal effects using observational data, it is desirable to replicate a randomized experiment as closely as possible by obtaining treatment and control groups with similar covariate distributions. This goal can often be achieved by choosing well-matched samples of the original treatment and control groups, thereby reducing bias in the estimated treatment effects due to the observed covariates.
The matching is to assign each treatment sample to one unique control sample, or, in other setups, to assign each treatment sample to a unique set of $\kappa$ control samples, for $\kappa$ a pre-specified integer, where every control sample is assigned to at most one treatment sample. A detailed review of matching-related methods used for covariates balancing problems is given by \cite{Stuart10}.

In this paper we address various problems of balancing covariates.  The covariates here are {\em nominal}, in that they take on discrete values or categories. The set of values of each nominal covariate partitions the treatment and control samples to a number of subsets referred to as \textit{levels} where the samples at every level share the same covariate value.  In an ideal situation the samples of the treatment and the control in each matched pair or matched set belong to the same levels over all covariates.  However, satisfying the requirement that matched samples in each pair or set belong to the same levels over all covariates typically results in a very small selection from the treatment and control group, which is not desirable.
To address this Rosenbaum et al. \cite{Rosenbaum07} introduced a weaker requirement to match all treatment samples to a subset of the control samples, called selection, so that the proportion (or the number, if $\kappa =1$)
of control and treatment samples in each level of each covariate are the same.  This requirement is known in the literature as {\em fine balance}.

To formalize the discussion we introduce essential notation.  Let the number of treatment samples be $n$ and the number of control samples be $n'$.  Let the set of all treatment samples be denoted by $\gT$, $|\gT |=n$.
Let $P$ be the number of covariates to be balanced. For $p=1,...,P$, covariate $p$ partitions both treatment and control groups into $k_p$ levels each.
Let the partition of the treatment group under covariate $p$ be  $L_{p,1}, L_{p,2}, ..., L_{p,k_p}$ of sizes $\ell_{p,1}, \ell_{p,2}, ..., \ell_{p,k_p}$. Similarly, let the partition of the control group under covariate $p$ be  $L'_{p,1}, L'_{p,2}, ..., L'_{p,k_p}$ of sizes $\ell'_{p,1}, \ell'_{p,2}, ..., \ell'_{p,k_p}$.
Let $\kappa$ be an integer specifying the ratio of the number of matched control samples to the number of matched treatment samples.

We define the $\kappa$-fine-balance constraints for a selection of treatment and a selection of control samples as follows:

\begin{definition}[$\kappa$-fine-balance]\label{def:fb}
For an integer $\kappa$, a selection $S \subseteq \gT$ of the treatment group and a selection $S'$ of the control group, we say that $(S,S')$-$\kappa$-fine-balance is satisfied if $\kappa \cdot |S\cap L_{p,i}|= |S'\cap L'_{p,i}|$ for $p=1,...,P$ and $i=1,...,k_p$.
\end{definition}
Obviously for $S$, $S'$ satisfying $(S,S')$-$\kappa$-fine-balance, the cardinality of $S'$ is $\kappa$ times as large as the cardinality of $S$, $|S'|=\kappa |S|$.  

We are now ready to define the three families of problems investigated here with complexity that varies according to the number of covariates and the value of $\kappa$.   The  {\em maximum $\kappa$-fine-balance selection} ($\kappa$-FBS) problem is to select a subset $S \subseteq \gT$ and a subset $S'$ of the control group so as to maximize the size of the selection $S$ (equivalent to maximizing the size of $S'$ since $|S'|=\kappa |S|$) where the $(S,S')$-$\kappa$-fine-balance constraints are satisfied.  This problem is introduced here for the first time.

A second problem studied here is the {\em $\kappa$-fine-balance matching} ($\kappa$-BM) problem, first introduced by Rosenbaum et al. \cite{Rosenbaum07}  for one covariate.  Here we are given a distance, or cost, measure between each treatment and each control sample.
The $\kappa$-BM problem is to minimize the total cost of the assignment of each treatment sample in $\gT$ to $\kappa$ control samples such that the selection of matched control samples $S'$ satisfies $(\gT,S')$-$\kappa$-fine-balance. 

Another problem family newly introduced here is an optimization where the feasible sets are optimal for another problem.   Formally, in the first stage the goal is to find the optimal selections to the $\kappa$-\maxB problem.
In the second stage, among all maximum sized selections, find the selection that minimizes the total distance of an assignment of each selected treatment sample to exactly $\kappa$ selected control samples. We refer to this problem as {\em maximum selection $\kappa$-fine-balance matching problem} ($\kappa$-MSBM).

For the case of $\kappa=1$, we ignore the prefix $\kappa$ so $(S,S')$-$\kappa$-fine-balance is called $(S,S')$-fine-balance,  $\kappa$-\maxB problem is called \maxB problem, $\kappa$-BM problem is called BM problem, and $\kappa$-MSBM problem is called MSBM problem.

A summary of the problems investigated here is given is Table \ref{table:problems}.

\begin{table}[h!]
	\centering
	\caption{Summary of problems studied here.}
	\begin{tabular}{ |c|c|c| }
		\hline
		Problem name & Objective & Constraints \\
		\hline
		max fine-balance selection (FBS) & $\max |S|$ & $(S,S')$-fine-balance \\
		max $\kappa$-fine-balance selection ($\kappa$-FBS)  & $\max |S|$ & $(S,S')$-$\kappa$-fine-balance \\
		fine-balance matching (BM) & $\min$ assignment cost & $(\gT,S')$-fine-balance \\
		$\kappa$-fine-balance matching ($\kappa$-BM) & $\min$ assignment cost  & $(\gT,S')$-$\kappa$-fine-balance \\
		max selection fine-balance matching (MSBM)& $\min$ assignment cost &  $(S,S')$ optimal for FBS\\
		max selection $\kappa$-fine-balance matching ($\kappa$-MSBM)& $\min$ assignment cost &  $(S,S')$ optimal for $\kappa$-FBS\\
		\hline
	\end{tabular}
	\label{table:problems}
\end{table}

\subsection{Related literature}
The concept of fine balance was first introduced by Rosenbaum et al. \cite{Rosenbaum07}, who studied the $\kappa$-BM problem for the $1$-covariate problem and proposed a network flow algorithm. No polynomial running time algorithm has been known for the $\kappa$-BM problem with two or more covariates.

It is not always feasible to find a selection $S'$ of the control samples that satisfies the $(\gT,S')$-$\kappa$-fine-balance constraints in the $\kappa$-BM problem. To that end several papers considered the goal of minimizing the violation of this requirement, which we refer to as {\em imbalance}, \cite{Yang12}, \cite{Zubizarreta12}, \cite{Pimentel15},  \cite{BVZ20}, \cite{HR20}. The studies in all these papers require the entire treatment group to be selected or matched. Bennett et al. \cite{BVZ20}, and Hochbaum and Rao \cite{HR20} considered finding the selection of control group that minimizes an imbalance objective, defined as $\sum_{p=1}^P \sum_{i=1}^{k_p} | |S'\cap L'_{p,i}|-\kappa\cdot\ell_{p,i} |$. This problem is called {\em minimum  $\kappa$-imbalance problem}. The problem is trivial to solve for the $1$-covariate problem (see \cref{sec:prelim} for details); the $2$-covariate problem was proved to be polynomial time solvable using linear programming in \cite{BVZ20} and using network flow algorithms in \cite{HR20}; for three or more covariates, the problem is NP-hard \cite{BVZ20,HR20}. Yang et al. \cite{Yang12} and Pimental et al. \cite{Pimentel15} considered a more complicated problem that minimizes the total assignment cost of the matched sets, each consisting of a single treatment sample and $\kappa$ control samples, subject to the requirement that the selection of matched control samples is optimal for the minimum $\kappa$-imbalance problem. Yang et al. \cite{Yang12} proposed two network flow algorithms for the case of the $1$-covariate problem; Pimental et al. \cite{Pimentel15} proposed a network flow algorithm for the case in which the covariates form a nested sequence.
Zubizarreta \cite{Zubizarreta12} considered a different variant which minimizes the total assignment cost of the matched sets with a penalty on the imbalance, and presented a mixed integer programming formulation for an arbitrary number of covariates.

 \subsection{Contributions}

We show here, for the first time, that the $2$-covariate BM problem is in fact NP-hard and therefore there is no polynomial time algorithm for the $\kappa$-BM problem with two or more covariates unless $P=NP$.  This NP-hardness result for $\kappa$-BM problem with two or more covariates holds for any value of $\kappa$.

The $\kappa$-MSBM problem is newly introduced here. It relaxes the requirement in the $\kappa$-BM problem of selecting {\em all} treatment samples and replaces it with a maximum size selection possible, while enforcing the $\kappa$-fine-balance constraints. This $\kappa$-MSBM problem, as shown here, is NP-hard with two or more covariates for any given value of $\kappa$. What's more, it is also proved here to be NP-hard for the $1$-covariate problem when $\kappa\geq 3$. We present a polynomial algorithm for the $1$-covariate MSBM problem, but we leave the complexity status of the $1$-covariate 2-MSBM problem open.  

The $\kappa$-\maxB problem is a simpler problem compared to the $\kappa$-MSBM problem. This problem is studied here for the first time. We prove that for three or more covariates, the \maxB and $\kappa$-\maxB problems are NP-hard for any value of $\kappa$.
For the case of the $2$-covariate problem we present here an efficient algorithm for the \maxB problem.  The algorithm is based on an integer programming formulation of the problem in which the constraint matrix, for two covariates, has the structure of network flow constraints.
For the resulting minimum cost network flow problem we apply an algorithm with running time $O(n\cdot (\min\{n+n', k_1k_2 \} + (k_1+k_2)\log (k_1+k_2))$. We also prove that for $\kappa \geq 3$, the $2$-covariate $\kappa$-\maxB problem is NP-hard. For the remaining case in which $\kappa=2$ and the number of covariates is two, the complexity status of the  $2$-\maxB problem is left open.

We observe here that, for {\em any} number of covariates, if the selections of treatment and control samples are fixed, then the optimal assignment among the selected samples, and therefore the optimal solution to the $\kappa$-MSBM problem, is attained by solving a minimum cost network flow problem. See \cref{sec:prelim} for details.

A summary of the complexity results for the three problem families is given in \cref{table:results}.

\begin{table}[h!]
	\centering
	\caption{Summary of complexity and algorithmic results derived here.}
	\subcaption*{(Here $n$ is the size of treatment group and $n'$ is the size of control group)}
	\begin{tabular}{ |c|c|c|c| }
		\hline
		Problem & One covariate & Two covariates & $\geq 3$ covariates \\
		\hline
		FBS & $O(n+n')$ & $O(n\cdot (\min\{n+n', k_1k_2 \} + (k_1+k_2)\log (k_1+k_2)))$  & NP-hard \\
		$\kappa$-FBS ($\kappa\geq 2$)   &  $O(n+n')$&  NP-hard for $\kappa\geq 3$, open for $\kappa=2$  & NP-hard\\
		$\kappa$-BM (any $\kappa$)& $O((n+n')^3)$ \cite{Rosenbaum07} & NP-hard & NP-hard \\
		MSBM & $O((n+n')nn')$ & NP-hard & NP-hard \\
		$\kappa$-MSBM ($\kappa\geq 2$)& NP-hard for $\kappa\geq 3$  & NP-hard & NP-hard \\
		& open for $\kappa=2$& & \\
		\hline
	\end{tabular}
	\label{table:results}
\end{table}

Beyond these complexity results, we also address here {\em fixed-parameter tractable} (FPT) results. We prove that the $\kappa$-FBS, $\kappa$-BM and MSBM problems are solvable in fixed-parameter tractable time for constant numbers of covariates levels, yet the $\kappa$-MSBM problem is NP-hard for constant $\kappa\geq 3$ even when the numbers of covariates levels are constant.  It remains an open problem whether the 2-MSBM problem is NP-hard for constant numbers of covariates levels.  

We also consider the $2$-covariate BM problem and the $2$-covariate $\kappa$-BM problem where one of the covariates has a constant number of levels (whereas the other one may have a linear number of levels), and we show that the complexity status of these special cases is tied to the complexity status of the exact matching problem, a problem that is known to have a randomized polynomial time algorithm \cite{MVV} but for which the existence of a deterministic polynomial time algorithm is a long-standing open problem.


\subsection{Paper overview}
In \cref{sec:prelim} we consider the case of the $1$-covariate $\kappa$-FBS, $\kappa$-BM, MSBM problems, and provide a compact representation of the sample selections. Then we present our complexity and algorithmic results of the other cases for the three families of problems separately, i.e., the $\kappa$-FBS problem in \cref{sec:maxB}, the $\kappa$-BM problem in \cref{sec:kBM}, and the $\kappa$-MSBM problem in \cref{sec:MSBM}. The fixed-parameter complexity results are provided in \cref{sec:FPT}.

\section{Preliminaries} \label{sec:prelim}

%

Consider first the case of a single covariate, $P=1$, that partitions the control and treatment groups into, say, $k$ levels each. Let the sizes of levels of the treatment group be $\ell_1, ..., \ell_k$, and the sizes of levels of the control group be $\ell'_1, ..., \ell'_k$.
It is easy to see that there exists a selection $S'$ of control samples that satisfies the $(\gT,S')$-$\kappa$-fine-balance if and only if $\ell'_i \geq \kappa \ell_i $ for $i=1,...,k$. If this condition is satisfied then any subset $S^*$ of the control group with $\kappa\cdot\ell_i$ samples in level $i$, $i=1,...,k$, satisfies the $(\gT,S^*)$-$\kappa$-fine-balance, and as such is a feasible selection for the $\kappa$-BM problem.
With these known numbers of control samples to be selected in each level, the optimal solution to the $1$-covariate $\kappa$-BM problem is found using a minimum cost network flow formulation, as shown next.
Note that a standard linear programming formulation of the minimum cost network flow (MCNF) is given in Appendix \ref{sec:flow}.

The MCNF problem the solution to which is an optimal solution to $\kappa$-BM is constructed on a bipartite graph with the treatment samples each represented by a node on one side, and the control samples each represented by a node on the other side.  The cost on each arc between a treatment sample and a control sample is the ``distance" value between the two, and the arc capacity is $1$.
Each treatment sample has a supply of $\kappa$.
To account for the requirement that in each level $i$ of control samples there will be $\kappa\cdot\ell_i$ samples matched we add to the bipartite graph a third layer of $k$ nodes, one for each level.  The $i$th node in the third layer  has demand of $\kappa\cdot\ell_i$ and there are arcs to this demand node from all control samples in level $i$  with capacity $1$ and cost of $0$.  In an optimal solution to this MCNF problem the control sample nodes through which there is a positive flow (of one unit) are the ones selected and matched to the respective treatment sample nodes from which they have a positive flow.

If $\ell'_i < \kappa \ell_i$ for some $i$ then there is no selection $S'$ of control samples that satisfies the $(\gT,S')$-$\kappa$-fine-balance.  Addressing this context, as mentioned earlier,  \cite{Yang12}, \cite{Zubizarreta12}, \cite{Pimentel15},  \cite{BVZ20}, \cite{HR20}, consider the problem of minimizing the $\kappa$-imbalance, which is the sum of violations for all levels,  $\sum_{i=1}^{k} | |S'\cap L'_{i}|-\kappa\cdot\ell_{i} |$.
The solution to this $1$-covariate minimum $\kappa$-imbalance problem is straightforward: in step 1, select $\min\{\kappa\cdot\ell_i,  \ell'_i \}$ control samples in level $i$ ; if the number of control samples selected is less than $n$ in step 1, then we select random additional control samples such that the selection is of size $n$.  Another way to address this context is to seek a solution for the $(S,S')$-$\kappa$-fine-balance where rather than forcing all samples of $\gT$ to be included, finding a solution in which the size of the selection $S$, and equivalently $|S'|$, is maximized, the $\kappa$-FBS problem. The solution to the $1$-covariate $\kappa$-FBS problem is also straightforward: select $\bar{\ell}_i=\min\{\ell_i, \lfloor \ell'_i/\kappa \rfloor \}$ treatment samples of level $i$ and $\kappa\cdot\bar{\ell}_i $ control samples of level $i$.

Again, for the $1$-covariate problem of finding an optimal matching, or assignment, among all optimal selections for either the minimum $\kappa$-imbalance or the FBS problem, we solve a MCNF problem for the known number of samples to select from each level, similar to the one defined above with the following modifications.  For the minimum $\kappa$-imbalance, we first need to change the demand of the demand nodes in the above MCNF problem from $\kappa\cdot\ell_i$ to $\min\{\kappa\cdot\ell_i,  \ell'_i \}$ for each level $i$. We also add a dummy demand node in the third layer with demand $\kappa \cdot n-\sum_{i=1}^k \min\{\kappa\cdot\ell_i,  \ell'_i \}$, which connects with all control nodes each with capacity 1 and cost of 0.
For the optimal selections of FBS, in addition to changing the demand from $\ell_i$ to $\bar{\ell}_i$ for each level $i$, we also remove the supply on each treatment sample, add for every level $i$ a supply node with supply $\bar{\ell}_i$, and add arcs from this supply node to all treatment samples in level $i$ with capacity $1$ and cost of $0$. The best assignment found with a selection that is optimal for the FBS problem is an optimal solution for the MSBM problem. However, this method does not apply to the $\kappa$-MSBM problem with $\kappa \geq 2$.  We further show that even the $1$-covariate $\kappa$-MSBM problem is NP-hard for $\kappa\geq 3$ (see \cref{sec:MSBM}).

Hence, all problems discussed here except for the $\kappa$-MSBM problem are polynomial time solvable for the $1$-covariate case.  In \cref{sec:MSBM} we show that the $1$-covariate $\kappa$-MSBM  does not admit a polynomial time algorithm for $\kappa\geq 3$ unless P=NP.

Consider next the case of multiple covariates.
For the $\kappa$-\maxB problem, we observe that the selections from the treatment and control groups can be represented compactly in terms of {\em level-intersections}.  For $P$ covariates, the intersection of the level sets $L_{1,i_1} \cap  L_{2,i_2}\cap \ldots \cap L_{P,i_P}$, $i_p=1, \ldots ,k_p$, $p=1, \ldots ,P$, form a partition of the treatment group. Similarly, the intersection of the level sets $L'_{1,i_1} \cap  L'_{2,i_2}\cap \ldots \cap L'_{P,i_P}$, $i_p=1, \ldots ,k_p$, $p=1, \ldots ,P$, form a partition of the control group.
Therefore, instead of specifying which sample belongs to the selection, it is sufficient to determine the {\em number} of selected samples in each level intersection for the two groups, since the identity of the specific selected samples has no effect on the fine balance requirement. With this discussion we have a theorem on the representation of the solution to the $\kappa$-\maxB problems in terms of the level-intersection sizes.

\begin{theorem}\label{thm:unique}
	The level-intersection sizes $s_{i_1,i_2,\ldots ,i_P}$ and $s'_{i_1,i_2,\ldots ,i_P}$ are an optimal solution to the $\kappa$-\maxB problem if there exists an optimal selection $S$ of treatment samples and $S'$ of control samples such that $s_{i_1,i_2,\ldots ,i_P}=|S \cap L_{1,i_1} \cap  L_{2,i_2}\cap \ldots \cap L_{P,i_P}|$ and $s'_{i_1,i_2,\ldots ,i_P}=|S' \cap L'_{1,i_1} \cap  L'_{2,i_2}\cap \ldots \cap L'_{P,i_P}|$, for $p=1, \ldots ,P$, $i_p=1, \ldots ,k_p$.
\end{theorem}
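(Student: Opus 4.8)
The plan is to set up a value-preserving correspondence between selections $(S,S')$ of the treatment and control groups and non-negative integer ``size profiles'' $(s,s')$ on the level-intersections, and then to read off the claim. The starting point is that, since each treatment sample lies in exactly one level of each covariate, the level-intersections $L_{1,i_1}\cap L_{2,i_2}\cap\cdots\cap L_{P,i_P}$ (over all $i_p=1,\ldots,k_p$, $p=1,\ldots,P$) partition $\gT$, and symmetrically the primed level-intersections partition the control group. Hence for any $S\subseteq\gT$ the sets $S\cap L_{1,i_1}\cap\cdots\cap L_{P,i_P}$ partition $S$, so that $|S|=\sum_{i_1,\ldots,i_P}s_{i_1,\ldots,i_P}$ with $s_{i_1,\ldots,i_P}:=|S\cap L_{1,i_1}\cap\cdots\cap L_{P,i_P}|$, and likewise $|S'|=\sum_{i_1,\ldots,i_P}s'_{i_1,\ldots,i_P}$; in particular the objective $|S|$ depends on $S$ only through its size profile.

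Second, I would rewrite the $\kappa$-fine-balance constraints purely in terms of $(s,s')$. Since $L_{p,i}$ is the disjoint union of the level-intersections $L_{1,i_1}\cap\cdots\cap L_{p,i}\cap\cdots\cap L_{P,i_P}$ over the remaining indices $i_q\in\{1,\ldots,k_q\}$, $q\neq p$, we have $|S\cap L_{p,i}|=\sum s_{i_1,\ldots,i_{p-1},i,i_{p+1},\ldots,i_P}$, the sum being over those remaining indices, and analogously for the control side. Therefore $\kappa\cdot|S\cap L_{p,i}|=|S'\cap L'_{p,i}|$ holds for all $p,i$ if and only if the corresponding linear identities hold among the numbers $s_{i_1,\ldots,i_P}$ and $s'_{i_1,\ldots,i_P}$. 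Combined with the first step, $(S,S')$ is feasible for the $\kappa$-FBS problem with objective value $|S|$ precisely when the induced profile is integral, non-negative, bounded above by the true sizes $|L_{1,i_1}\cap\cdots\cap L_{P,i_P}|$ and $|L'_{1,i_1}\cap\cdots\cap L'_{P,i_P}|$, satisfies those linear identities, and has $\sum_{i_1,\ldots,i_P}s_{i_1,\ldots,i_P}=|S|$.

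To close the loop I would prove realizability in the other direction: given any non-negative integers $s_{i_1,\ldots,i_P}\le|L_{1,i_1}\cap\cdots\cap L_{P,i_P}|$ and $s'_{i_1,\ldots,i_P}\le|L'_{1,i_1}\cap\cdots\cap L'_{P,i_P}|$ obeying those identities, picking an arbitrary $s_{i_1,\ldots,i_P}$-element subset of each treatment level-intersection and an arbitrary $s'_{i_1,\ldots,i_P}$-element subset of each control level-intersection yields a pair $(S,S')$ feasible for $\kappa$-FBS with $|S|=\sum_{i_1,\ldots,i_P}s_{i_1,\ldots,i_P}$, since, as noted, the identity of the chosen samples is immaterial. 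This gives a value-preserving correspondence between feasible selections and feasible profiles, so that an optimal selection $(S,S')$ induces a profile maximizing $\sum_{i_1,\ldots,i_P}s_{i_1,\ldots,i_P}$ over all feasible profiles --- that is, an optimal solution of the $\kappa$-FBS problem phrased over size profiles --- which is exactly the assertion. I do not anticipate a genuine obstacle here: the whole argument is the bookkeeping that the level-intersections refine each covariate's level partition, so every fine-balance datum is a function of the size profile. The only place warranting a moment's care is the realizability direction, and there the within-level-intersection choices are harmless because each constraint involves only cardinalities inside a single level.
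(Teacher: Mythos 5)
Your argument is correct and is essentially the same as the paper's justification: the paper proves this theorem only through the preceding discussion, observing that the level-intersections partition each group, that the fine-balance constraints depend only on the counts $|S\cap L_{p,i}|$ (which are sums of level-intersection sizes), and that any admissible size profile is realizable by an arbitrary choice of samples within each intersection. Your write-up simply makes that bookkeeping explicit, including the realizability direction that the paper also notes when it explains how to recover a selection from the sizes.
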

We will say that the optimal selection for the covariates problems here is {\em unique} if for any optimal selection $S$ and $S'$, the numbers $s_{i_1,i_2,\ldots ,i_P}=|S \cap L_{1,i_1} \cap  L_{2,i_2}\cap \ldots \cap L_{P,i_P}|$ and $s'_{i_1,i_2,\ldots ,i_P}=|S' \cap L'_{1,i_1} \cap  L'_{2,i_2}\cap \ldots \cap L'_{P,i_P}|$ are unique.
In order to derive an optimal selection given the optimal level-intersection sizes, one selects any $s_{i_1,i_2,\ldots ,i_P}$ treatment samples from the intersection $L_{1,i_1} \cap  L_{2,i_2}\cap \ldots \cap L_{P,i_P}$ and any $s'_{i_1,i_2,\ldots ,i_P}$ control samples from the intersection $L'_{1,i_1} \cap  L'_{2,i_2}\cap \ldots \cap L'_{P,i_P}$ for $i_p=1, \ldots ,k_p$, $p=1, \ldots ,P$. 

We observe here that, for any number of covariates, if the optimal selection of treatment and control samples in terms of level-intersections is known and unique, then the optimal assignment among the selected samples, and therefore the optimal solution to the $\kappa$-MSBM problem, can also be attained by solving an MCNF problem as follows.
For each non-zero level intersection of treatment samples there is a source node with supply of $ s_{i_1,i_2,\ldots ,i_P}$.  This source node is connected to all treatment samples in the intersection $L_{1,i_1} \cap  L_{2,i_2}\cap \ldots \cap L_{P,i_P}$ with arcs of capacity $1$ and cost of $0$.  For each non-zero level intersection of control samples there is a demand node with supply of $s'_{i_1,i_2,\ldots ,i_P}$. This demand node is connected from all control samples in the intersection $L'_{1,i_1} \cap  L'_{2,i_2}\cap \ldots \cap L'_{P,i_P}$ with arcs of capacity $1$ and cost of $0$. The treatment and control sample nodes through which there is a positive flow (of sone unit) are the ones selected, and a positive flow between a  treatment node and a control node indicates the two samples are matched.  This is a minimum cost network flow problem with a total demand (or supply) bounded by $\min\{n,n'\}$, and $O(nn')$ arcs and $O(n+n')$ nodes.  Therefore the successive shortest paths algorithm, discussed below in \cref{sec:maxB}, solves this problem in $O((n+n') nn')$ steps.

\section{The maximum  $\kappa$-fine-balance selection ($\kappa$-\maxB) problem}\label{sec:maxB}

In this section we show the complexity and algorithmic results for the $\kappa$-\maxB problems. We present the results separately first for the $1$-covariate problem, next for three or more covariates, then the $2$-covariate \maxB problem, and finally the $2$-covariate $\kappa$-\maxB problem for $\kappa\geq 3$.

%


The solution to the $1$-covariate  $\kappa$-\maxB problem is straightforward, as discussed in \cref{sec:prelim}: select $\bar{\ell}_i=\min\{\ell_{1,i}, \lfloor \ell'_{1,i}/\kappa \rfloor \}$ number of level $i$ treatment samples and $\kappa\cdot\bar{\ell}_i $ number of level $i$ control samples. The union of the selections at each level is an optimal solution for the $1$-covariate $\kappa$-\maxB problem.The value of the objective function corresponding to this solution is $\sum_{i=1}^{k_1} \bar{\ell}_i=\sum_{i=1}^{k_1} \min\{\ell_{1,i},  \lfloor \ell'_{1,i}/\kappa \rfloor\}$.

\subsection{NP-hardness for the $\kappa$-\maxB problem for any constant $\kappa$ with $P\geq3$}\label{sec:NPC}

We show here that even for $\kappa$ being constant, the $\kappa$-\maxB problem with three or more covariates is NP-hard by reducing the  {\em 3-Dimensional Matching} problem, one of Karp's 21 NP-hard problems \cite{KARP72}\cite{GJ79}.
\\
{\em 3-Dimensional Matching}: Given a finite set $X$ and a set of triplets $U\subset X\times X\times X$. Is there a subset $M\subseteq U$ such that $|M|=|X|$ and that no two elements of $M$ agree in any coordinate?

\begin{theorem}\label{thm:npP3} The $\kappa$-\maxB problem is NP-hard when $P=3$ even for constant $\kappa$.
\end{theorem}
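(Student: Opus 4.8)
The plan is to reduce \emph{3-Dimensional Matching} to the $\kappa$-\maxB problem with $P=3$, with the objective threshold set to $q:=|X|$. Since the value of $\kappa\ge 1$ will play no essential role, the same reduction handles every constant $\kappa$ (and in particular the FBS case $\kappa=1$).

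First I would build the instance. Set $k_1=k_2=k_3=q$ and identify the levels of covariate $p$ with the elements of $X$, thinking of covariate $p$ as recording the $p$-th coordinate. For each triplet $u=(a,b,c)\in U$ create one treatment sample $t_u$ and place it in level $a$ of covariate $1$, level $b$ of covariate $2$, and level $c$ of covariate $3$; thus $\gT=\{t_u:u\in U\}$, and the $t_u$ are pairwise distinct. For each $x\in X$ create $\kappa$ identical control samples, each lying in level $x$ of all three covariates, so the control group has exactly $\kappa q$ samples. This is computable in time polynomial in $|X|+|U|$ because $\kappa$ is a constant.

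Next I would prove correctness, which reduces to the single identity obtained by summing the fine-balance equalities over the levels of covariate $1$. For any feasible pair $(S,S')$, $\kappa|S|=\sum_i\kappa|S\cap L_{1,i}|=\sum_i|S'\cap L'_{1,i}|=|S'|\le \kappa q$, so the objective is always at most $q$. If the 3DM instance has a perfect matching $M$, I would take $S=\{t_u:u\in M\}$ and let $S'$ be the whole control group: then $|S'\cap L'_{p,x}|=\kappa$ for every covariate $p$ and level $x$, while $|S\cap L_{p,x}|$ equals the number of triplets of $M$ whose $p$-th coordinate is $x$, which is exactly $1$ because $M$ is a perfect matching, so $(S,S')$ is feasible and $|S|=q$. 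Conversely, if $(S,S')$ is feasible with $|S|=q$, then $|S'|=\kappa q$ forces $S'$ to be the entire control group, hence $|S'\cap L'_{p,x}|=\kappa$ and so $|S\cap L_{p,x}|=1$ for all $p,x$; this says precisely that $M:=\{u:t_u\in S\}$ is a subset of $U$ with $|M|=|X|$ and no two of its triplets agreeing in any coordinate. Thus the optimum of the constructed $\kappa$-\maxB instance equals $q$ exactly when $(X,U)$ is a yes-instance of 3DM, which establishes NP-hardness for every constant $\kappa$.

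I do not expect a genuine obstacle: the whole idea is carried by the control group --- the $\kappa$-fold \emph{diagonal} copies --- engineered so that the maximal objective $q$ can be attained only by selecting the full control group, which in turn pins each treatment level to exactly one selected sample and thereby forces a perfect matching. The only steps needing any care are the one-line upper bound $|S|\le q$ and the observation that attaining $|S|=q$ forces $S'$ to be \emph{all} of the control group rather than merely most of it.
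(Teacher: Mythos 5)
Your reduction is correct, and it is genuinely different from (and simpler than) the one in the paper. The paper also reduces from 3-Dimensional Matching, but it encodes the \emph{elements} of $X$ as treatment samples $(i,i,i)$ (padded with $|X|$ copies each of $(0,0,0)$ and $(0',0',0')$) and the \emph{triplets} as control samples, and it absorbs the factor $\kappa$ by adding gadget control sets with $\kappa-1$ copies of auxiliary samples such as $(i,0',0)$, $(0',0,i)$, $(0,i,0')$; the target value is $3|X|$, i.e., the full treatment group must be selected, and two extra levels $0,0'$ per covariate are needed to make the counting work. You invert the roles: triplets become treatment samples, and the control group is the $\kappa$-fold diagonal $\{(x,x,x)\}$, so the identity $\kappa|S|=|S'|\le\kappa q$ immediately caps the objective at $q$, and attaining $q$ forces $S'$ to be the entire control group, which pins $|S\cap L_{p,x}|=1$ for every $p$ and $x$. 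This eliminates all dummy levels and gadgets, handles every constant $\kappa\ge 1$ uniformly (including the FBS case $\kappa=1$), and the degenerate situation where some element never occurs in some coordinate is handled consistently (both the 3DM instance and the constructed instance are then ``no'' instances). I see no gap; the only steps requiring care are exactly the two you flag, and both are argued correctly.
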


\begin{proof}
	Given an instance of 3-dimensional matching problem with a finite set $X$ and a set of triplets $U\subset X\times X\times X$, we construct an instance of $\kappa$-\maxB problem with $P=3$ for any constant $\kappa$. Without loss of generality, we assume $X=\{1,...,|X|\}$.
	
	First we define the levels of the three covariates. For $p=1,2,3$, the set of levels of covariate $p$ is $\{1,...,|X|\} \cup \{0,0'\}$. So all the three covariates have $|X|+2$ levels.
	
	Next, we construct the samples. For each sample, we represent it by the ordered triplet $(a,b,c)$ where $a$ is the level of the first covariate, $b$ is the level of the second covariate, and $c$ is the level of the third covariate. The treatment group contains a sample $(i,i,i)$ for $i=1,...,|X|$ as well as $|X|$ copies of $(0,0,0)$ and $|X|$ copies of $(0',0',0')$.
	For each triplet $u\in U$, whose elements are denoted by $[u_1, {u_2}, {u_3}]$, we create one control sample $({u_1}, {u_2}, {u_3})$. In addition, for each element $i\in X$ we have $(\kappa-1)$ copies for each of the three control samples $(i,0',0), (0',0,i), (0,i,0')$. We also create $|X|$ copies of $(0,0,0)$ and $|X|$ copies of $(0',0',0')$ for the control group. That is, the control group is the union of the following three sets (we represent the different copies by a superscript as shown below.):
	\begin{eqnarray*}
	C_1 &=& \{({u_1}, {u_2}, {u_3}): \forall u=[{u_1}, {u_2}, {u_3}]\in U \}, \\
	C_2 &= &\{ (i,0',0)\sw, (0',0,i)\sw, (0,i,0')\sw: \forall i=1,...,|X|, \forall w=1,...,\kappa-1\}, \\
	C_3 &= &\{ (0,0,0)\sw, (0',0',0')\sw:\forall w=1,...,|X|  \} .
	\end{eqnarray*}
	The treatment group constructed is of size $3|X|$ and the control group constructed is of size $|U|+(3\kappa-1)|X|$. The two sizes are both polynomially bounded in the size of the 3-dimensional matching instance so the reduction can be computed in polynomial time.
	
	Finally, we claim that the optimal value of the constructed $3$-covariate $\kappa$-\maxB problem is $3|X|$ if and only if there exist a subset $M\subseteq U$ such that $|M|=|X|$ and that no two elements of $M$ agree in any coordinate for the 3-dimensional matching instance.
	
	Let $M\subseteq U$ be the solution for the 3-dimensional matching instance, we derive a solution for the constructed problem as follows. We select all the treatment samples. For each triplet $u\in M$, we choose the control sample in $C_1$ whose covariates levels are corresponding to the elements in $u$.
	Additionally, we also choose all control samples in $C_2$ and $C_3$.
	To check the feasibility of this solution, first consider the appearances of level $i$ for each $i=1,...,|X|$ under each covariate $p=1,2,3$: level $i$ appears once in the treatment group for each covariate $p$; it appears once among the selected samples in $C_1$ as $i$ appears once in each coordinate in $M$; it appears for $\kappa-1$ times in $C_2$; it does not appear in $C_3$. That is, there is exactly one selected treatment sample of level $i$ under covariate $p$ and $\kappa$ selected control samples of level $i$ under covariate $p$, for each $i$ and $p$.
	Next, consider appearances of level $0$ and $0'$ under each covariate $p=1,2,3$: they each appear $|X|$ times in the treatment group; they do not appear in $C_1$; they each appear $(\kappa-1) |X|$ times in $C_2$ and $|X|$ times in $C_3$. That is, $0$ and $0'$ each appear $\kappa |X|$ times in the selected control samples under each covariate.
	Therefore, this selection is feasible and the objective value, the number of selected treatment samples, is $3|X|$.
	
	On the other hand, if the constructed $3$-covariate $\kappa$-\maxB problem has an optimal solution $S,S'$ of objective value $3|X|$, we say that $u=[u_1, {u_2}, {u_3}]\in U$ is selected for $M$ if the control sample $(u_1, {u_2}, {u_3})$ is selected in $S'$ for the constructed problem.  We will show that $M$ is a feasible solution of the 3-dimensional matching instance. Since the size of the treatment group is $3|X|$, all treatment samples must be selected in the optimal solution, and that $3\kappa |X|$ number of control samples must be selected. For each covariate $1,2,3$, levels $0$ and $0'$ each appears $|X|$ times in the treatment group, so the number of appearance of each of these two levels must be $\kappa |X|$ in the selection $S'$ of the control samples. So all samples in $C_2$ and $C_3$ must be selected, otherwise there is no enough level $0$ or level $0'$ samples in $S'$. Therefore, $M=3\kappa |X| -|C_2| - |C_3|=|X|$. Furthermore, for each covariate $p$ and for $i=1,...,|X|$, level $i$ appears exactly once in the treatment group so there are $\kappa$ number of selected control samples in level $i$. For each $i$ under each covariate $p$, since there are $(\kappa-1)$ number of samples in level $i$ in $C_2\cup C_3$, only one sample in $C_1$ in that same level is selected. So there is no overlap in each coordinate for any two triplets in $M$.
	
	With the above arguments, any 3-dimensional matching problem can be reduced to a $3$-covariate $\kappa$-\maxB problem for any constant integer $\kappa$, and hence, the  $\kappa$-\maxB problem is NP-hard for any such $\kappa$ when $P=3$.
\end{proof}

\begin{corollary}\label{cor:np3cov}
	The $\kappa$-\maxB problem is NP-hard for any integer $P\geq3$, even for constant $\kappa$.
\end{corollary}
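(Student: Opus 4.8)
The plan is to reduce the case $P=3$, shown NP-hard in \cref{thm:npP3}, to the case of an arbitrary $P\geq 3$ by a trivial padding argument. Given an instance of the $3$-covariate $\kappa$-\maxB problem on a treatment group $\gT$ and a control group, I would construct an instance of the $P$-covariate $\kappa$-\maxB problem on the \emph{same} treatment and control samples: covariates $1,2,3$ are retained unchanged, and $P-3$ additional covariates are introduced, each having a single level (so $k_p=1$ for $p=4,\ldots,P$), into which every treatment sample and every control sample falls. This construction is plainly computable in polynomial time.

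Next I would show that the two instances have exactly the same feasible selections and the same objective. For any selection $(S,S')$ and any dummy covariate $p\in\{4,\ldots,P\}$, the associated $\kappa$-fine-balance constraint is $\kappa\cdot|S\cap L_{p,1}| = |S'\cap L'_{p,1}|$, that is, $\kappa\cdot|S| = |S'|$; but this equality is already implied by the fine-balance constraints of covariate $1$ by summing them over the levels $i=1,\ldots,k_1$. Hence the dummy covariates impose no new restriction, so $(S,S')$ is feasible for the $P$-covariate instance if and only if it is feasible for the $3$-covariate instance, and in both cases the objective value $|S|$ is the same. Consequently the optimal value of the $P$-covariate instance equals that of the $3$-covariate instance, and any polynomial-time algorithm for the former would yield one for the latter.

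Combining this with \cref{thm:npP3} gives that the $\kappa$-\maxB problem is NP-hard for every integer $P\geq 3$, even for constant $\kappa$. I do not anticipate any real obstacle; the only point needing (minimal) care is to confirm that a single-level covariate contributes a constraint that is automatically satisfied by every solution of the base instance, which is immediate from the observation above. If one prefers to avoid degenerate single-level covariates, an equivalent alternative is to duplicate covariate $3$ a suitable number of times: identical covariates impose identical constraints, so the same argument goes through verbatim.
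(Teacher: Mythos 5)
Your proposal is correct and matches the paper's own proof: both pad the $3$-covariate instance with $P-3$ single-level dummy covariates and observe that these impose no additional constraints. Your explicit check that the dummy constraint $\kappa|S|=|S'|$ is implied by summing the covariate-$1$ fine-balance constraints is a slightly more careful justification of the same step.
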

\begin{proof}
	
	For any constant integer $\kappa$, any $3$-covariate $\kappa$-\maxB problem, and any $P>3$, we can construct an equivalent $P$-covariate $\kappa$-\maxB problem as follows: for each sample of the given $3$-covariate  $\kappa$-\maxB problem, we create a sample for the constructed  $\kappa$-\maxB problem such that they have the same level value for covariate $p=1,2,3$. For $p=4,...,P$, set covariate $p$ to have only one level so all samples in the constructed  $\kappa$-\maxB problem have the same value.
	
	Therefore, the NP-hardness of $3$-covariate  $\kappa$-\maxB problem implies that the $P$-covariate $\kappa$-\maxB problem is NP-hard for every value of $P$ when $P\geq3$.
\end{proof}

Since the $\kappa$-\maxB problem is NP-hard for $P\geq 3$, there is no polynomial time algorithm unless $P=NP$.

In the following subsections, we will discuss the remaining case of the $2$-covariate problems.


\subsection{The network flow algorithm for \maxB with $P=2$}\label{sec:ip}

In this subsection, we present an integer programming formulation with network flow constraints for the $2$-covariate \maxB problem. We then show how to solve the problem efficiently with a network flow algorithm.

It was noted, in \cref{thm:unique}, that there is no differentiation between the individual samples selected in each level intersection, only the number of those selected counts. We thus define the decision variables as follows:
\\ \indent $x_{i_1,i_2}$: the number of  treatment samples selected from the $(i_1, i_2)$ level intersection $\lT_{1,i_1} \cap \lT_{2,i_2}$, for $ i_1=1,...,k_1$ and $ i_2=1,...,k_2$;
\\ \indent  $x'_{i_1,i_2}$: the number of  control samples selected from the $(i_1, i_2)$ level intersection $\lC_{1,i_1} \cap \lC_{2,i_2}$, for $ i_1=1,...,k_1$ and $ i_2=1,...,k_2$. \\
Let $u_{i_1,i_2}=|\lT_{1,i_1} \cap \lT_{2,i_2}|$ and $u'_{i_1,i_2}=|\lC_{1,i_1} \cap \lC_{2,i_2}|$ for $ i_1=1,...,k_1$, $ i_2=1,...,k_2$. Clearly, $x_{i_1,i_2}$ must be an integer between $0$ and $u_{i_1,i_2}$, and $x'_{i_1,i_2}$ must be an integer between $0$ and $u'_{i_1,i_2}$.
With these decision variables the following is an integer programming formulation for the $2$-covariate \maxB problem:
\begin{subequations}\label{MS}
	\begin{align}
	\mbox{\IP~~~~}& \max &  \sum_{i_1=1}^{k_1} \sum_{i_2=1}^{k_2}  x_{i_1,i_2} &\label{obj} \\
	&\text{s.t.} &  \sum_{i_2=1}^{k_2}  x_{i_1,i_2} - \sum_{i_2=1}^{k_2}  x'_{i_1,i_2}  =0   && i_1=1,...,k_1  \label{cons:b1} \\
	&&  \sum_{i_1=1}^{k_1}  x_{i_1,i_2} - \sum_{i_1=1}^{k_1}  x'_{i_1,i_2}  =0   && i_2=1,..., k_2 \label{cons:b2}\\
	&& 0\leq x_{i_1,i_2} \leq u_{i_1,i_2}&&  i_1=1,...,k_1, \quad i_2=1,..., k_2 \label{cons:t}\\
	&& 0\leq x'_{i_1,i_2} \leq u'_{i_1,i_2}&&  i_1=1,...,k_1, \quad i_2=1,..., k_2 \label{cons:c}\\
	&&  x_{i_1,i_2}, x'_{i_1,i_2} \text{ integers}  && i_1=1,...,k_1, \quad i_2=1,..., k_2 \label{cons:int}
	\end{align}.
\end{subequations}

The objective \cref{obj} is the total number of selected treatment samples. Constraints \cref{cons:b1} are the fine balance requirement under covariate 1, as $\sum_{i_2=1}^{k_2}  x_{i_1,i_2}$ equals the number of selected treatment samples in level $i_1$ under covariate 1 and $ \sum_{i_2=1}^{k_2}  x'_{i_1,i_2}$ equals the number of selected control samples in the same level. Similarly,  constraints \cref{cons:b2} are the fine balance requirement under covariate 2.

Formulation \IP is in fact also a network flow formulation. In a minimum cost network flow formulation, each column of the constraint matrix corresponding to a variable that is a flow along an arc, has exactly one 1 and one -1.
The corresponding MCNF network is shown in \cref{fig:mcnf}, where all capacity lower bounds are $0$, and each arc has a cost per unit flow and upper bound associated with it.
The flow on the arc from node $(1,i_1)$ to node $(2,i_2)$ represents variable $x_{i_1,i_2}$, which is bounded between $0$ and $u_{i_1,i_2}$ as stated in constraints \cref{cons:t}; arc from node $(2,i_2)$ to node $(1,i_1)$ represents variable $x'_{i_1,i_2}$, which is bounded between $0$ and $u'_{i_1,i_2}$ as stated in constraints \cref{cons:c}. To get a ``minimize'' type objective, we take the negative value of $|S|= \sum_{i_1=1}^{k_1} \sum_{i_2=1}^{k_2}  x_{i_1,i_2}$ as the objective, so the per unit arc cost should be $-1$ for arcs from any node in $\{(1,1), (1,2), ..., (1,k_1) \}$ to any node in $ \{(2,1), (2,2), ...,(2,k_2)\}$. All other arcs have cost $0$. It is easy to verify that constraints \cref{cons:b1} are corresponding to the flow balance at nodes $(1,i_1)$ for all $i_1$, constraints \cref{cons:b2} are corresponding to the flow balance at nodes $(2,i_2)$ for all $i_2$.

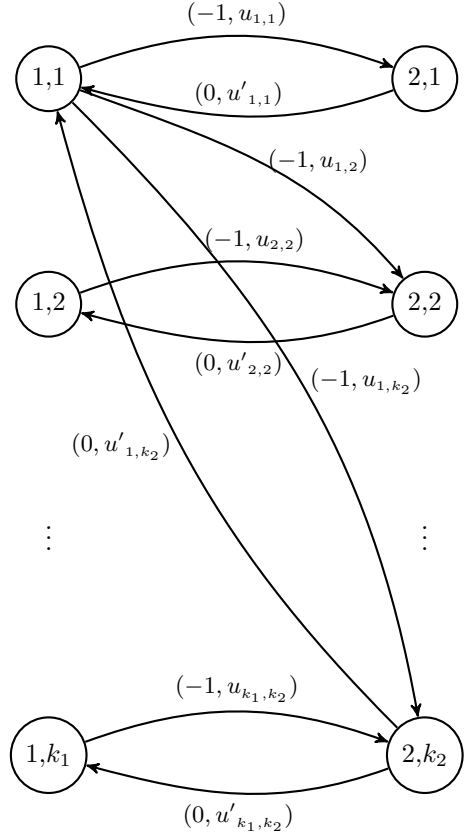
\begin{figure}[htb]
	\centering
	\begin{tikzpicture}	
	\begin{scope}[every node/.style={circle,thick,draw}]
	\node (11) at (0,0) {1,1};
	\node (12) at (0,-3) {1,2};
	\node (1k1) at (0,-9) {1,$k_1$};
	\node (21) at (5,0) {2,1};
	\node (22) at (5,-3) {2,2};
	\node (2k2) at (5,-9) {2,$k_2$};
	\end{scope}

	\begin{scope}[every node/.style={thick}]
	\node (1.) at (0,-6) {$\vdots$};
	\node (2.) at (5,-6) {$\vdots$};
	
	\end{scope}
	
	\begin{scope}[
	every edge/.style={draw=black, thick,font=\small,align=center, auto}]
	\path [->] (11) edge [out=20,in=160] node {$(-1, u${\tiny $_{1,1}$}$)$} (21);
	\path [->] (11) edge [out=-20,in=130] node[right] {$(-1, u${\tiny $_{1,2}$}$)$} (22);
	\path [->] (11) edge [out=-45,in=100] node[right] {$(-1, u${\tiny $_{1,k_2}$}$)$} (2k2);
	\path [->] (12) edge  [out=20,in=160] node[pos=.559] {$(-1, u${\tiny $_{2,2}$}$)$}  (22);
	\path [->] (1k1) edge [out=20,in=160] node {$(-1, u${\tiny $_{k_1,k_2}$}$)$}  (2k2);
	
	\path [->] (21) edge [out=200,in=-15] node[above] {$(0, u'${\tiny $_{1,1}$}$)$} (11);
	\path [->] (22) edge [out=200,in=-15] node {$(0, u'${\tiny $_{2,2}$}$)$}  (12);
	\path [->] (2k2) edge [out=200,in=-15] node {$(0, u'${\tiny $_{k_1,k_2}$}$)$}  (1k1);
	\path [->] (2k2) edge [out=135,in=-75] node[left] {$(0, u'${\tiny $_{1,k_2}$}$)$} (11);

	\end{scope}
	
	\end{tikzpicture}
	\caption{Min-cost network flow graph corresponding to formulation \IP.}
	\subcaption*{arc legend: (cost, upperbound)}
	\label{fig:mcnf}
\end{figure}

\begin{theorem}\label{thm:timeMCNF}
	The $2$-covariate \maxB problem is solved as a minimum cost network flow problem in $O(n\cdot (\min\{n+n', k_1k_2 \} + (k_1+k_2)\log (k_1+k_2))$ time.
\end{theorem}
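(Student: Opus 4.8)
The plan is to analyze the minimum cost network flow problem of \cref{fig:mcnf}, whose equivalence with \IP (hence with the $2$-covariate \maxB problem) was already established. Two structural remarks come first. The constraint matrix is a node--arc incidence matrix of a directed graph and all capacities $u_{i_1,i_2},u'_{i_1,i_2}$ are integral, so the linear relaxation has an integral optimum and the integrality constraints \cref{cons:int} are redundant; and the cost of a circulation equals $-\sum_{i_1,i_2}x_{i_1,i_2}=-|S|$ for the corresponding selection $S$, so a minimum cost circulation is exactly a \maxB optimum (feasibility is never an issue, since the zero circulation $S=S'=\emptyset$ is feasible). Next I would clean up the instance: a level of a covariate that is empty in the treatment group forces, via \cref{cons:b1}--\cref{cons:b2}, that no control sample of that level is selected, so it and its control samples can be deleted; after this preprocessing every level is nonempty in the treatment group, hence $k_1,k_2\le n$, and since the stated time bound is monotone in $k_1,k_2$ this is without loss of generality. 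The resulting network then has $V=O(k_1+k_2)$ nodes, arcs consisting of the forward arcs with $u_{i_1,i_2}\ge 1$ (at most $\min\{n,k_1k_2\}$ of them) and the backward arcs with $u'_{i_1,i_2}\ge 1$ (at most $\min\{n',k_1k_2\}$), so $E=O(\min\{n+n',k_1k_2\})$ using $k_1+k_2\le 2\min\{n,k_1k_2\}$, and every circulation has value $\sum_{i_1,i_2}x_{i_1,i_2}=|S|\le\sum_{i_1,i_2}u_{i_1,i_2}=n$.

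The algorithm is the successive shortest path method, with one twist: every cycle of the network has cost equal to minus the number of its forward arcs, hence is negative, so one cannot start from the zero flow. Instead I would start from the pseudoflow that saturates every (negative-cost) forward arc. In its residual network every arc now runs from a $(2,\cdot)$ node to a $(1,\cdot)$ node, with costs only in $\{0,+1\}$ and therefore no negative cycle; it is a bipartite transportation-type instance in which node $(2,i_2)$ has excess $\ell_{2,i_2}$, node $(1,i_1)$ has deficit $\ell_{1,i_1}$, and the total excess is $\sum_{i_2}\ell_{2,i_2}=n$ (equivalently: attach a source with arcs of capacity $\ell_{2,i_2}$ into each $(2,i_2)$ and a sink with arcs of capacity $\ell_{1,i_1}$ out of each $(1,i_1)$, obtaining a nonnegative-cost min-cost max-flow instance of maximum value $n$). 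I would then cancel the imbalances by repeatedly augmenting along a shortest reduced-cost path, maintaining node potentials so that all residual reduced costs stay nonnegative; by the standard optimality invariant of successive shortest paths this terminates with a minimum cost circulation, i.e.\ a \maxB optimum.

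For the running time, each augmentation carries at least one unit of integral flow and the total amount to route is $n$, so there are at most $n$ augmentations. Each augmentation is one shortest-path computation on a graph with $O(k_1+k_2)$ nodes and $O(\min\{n+n',k_1k_2\})$ arcs, which with Fibonacci heaps costs $O(\min\{n+n',k_1k_2\}+(k_1+k_2)\log(k_1+k_2))$. Multiplying the two bounds gives $O(n\,(\min\{n+n',k_1k_2\}+(k_1+k_2)\log(k_1+k_2)))$, as claimed.

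The main obstacle is the negative-cost cycles, which preclude the textbook zero-flow start for successive shortest paths; the remedy of first saturating all negative arcs and then canceling the induced node imbalances (equivalently, solving a nonnegative-cost min-cost max-flow on the residual graph) is precisely what pins the iteration count at $O(n)$, since the total residual imbalance equals the number $n$ of treatment samples. The only other point needing care is the bookkeeping — the treatment-empty-level cleanup together with restricting to nonempty level-intersections — that brings the arc count down to $O(\min\{n+n',k_1k_2\})$ rather than $O(k_1k_2)$ while keeping $V=O(k_1+k_2)$, so that a single Dijkstra with Fibonacci heaps meets the stated per-iteration cost.
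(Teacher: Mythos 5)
Your proposal is correct and takes essentially the same route as the paper: your step of saturating every negative-cost forward arc and then canceling the induced node imbalances is exactly the arc-reversal transformation the paper applies (yielding supplies $\ell_{2,i_2}$ and demands $\ell_{1,i_1}$ with total $n$), followed by the same successive-shortest-path/Dijkstra analysis with the same arc-count bound. Your explicit cleanup of treatment-empty levels to justify $k_1+k_2=O(n)$ is a small point the paper glosses over, but it does not change the argument.
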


\begin{proof}
	
	To solve the minimum cost network flow problem of the $2$-covariate \maxB problem, we choose the algorithm of {\em successive shortest paths} that is particularly efficient for a MCNF with ``small" total arc capacity (see \cite{AMO93} Section 9.7). The successive shortest paths algorithm starts with a network graph with no negative cycles, so we first modify the network shown in  \cref{fig:mcnf} using a well-known arc reversal transformation in \cite{AMO93} Section 2.4. The resulting network graph is shown in  \cref{fig:mcnf2}.

	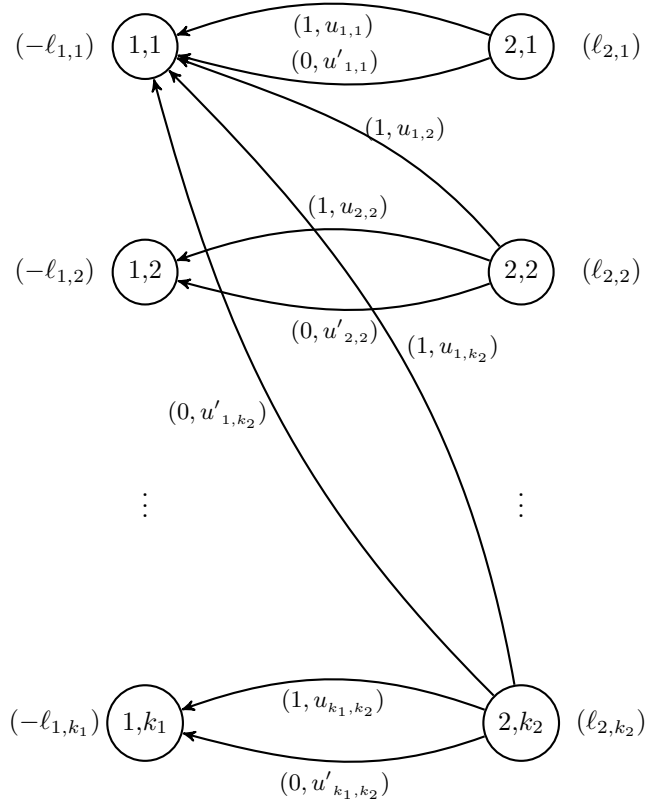
\begin{figure}[htb]
		\centering
		\begin{tikzpicture}	
		\begin{scope}[every node/.style={circle,thick,draw}]
		\node (11) at (0,0) {1,1};
		\node (12) at (0,-3) {1,2};
		\node (1k1) at (0,-9) {1,$k_1$};
		\node (21) at (5,0) {2,1};
		\node (22) at (5,-3) {2,2};
		\node (2k2) at (5,-9) {2,$k_2$};
		\end{scope}

		\begin{scope}[every node/.style={thick}]
		\node (1.) at (0,-6) {$\vdots$};
		\node (2.) at (5,-6) {$\vdots$};
		
		\node  at (-1.2,0) {$(-\ell_{1,1})$};
		\node  at (-1.2,-3) {$(-\ell_{1,2})$};
		\node  at (-1.2,-9) {$(-\ell_{1,k_1})$};
		\node  at (6.2,0) {$(\ell_{2,1})$};
		\node  at (6.2,-3) {$(\ell_{2,2})$};
		\node  at (6.2,-9) {$(\ell_{2,k_2})$};
		
		\end{scope}
		
		\begin{scope}[
		every edge/.style={draw=black, thick,font=\small,align=center, auto}]
		\path [->] (21) edge [out=160,in=20] node {$(1, u${\tiny $_{1,1}$}$)$} (11);
		\path [->] (22) edge [out=130,in=-20] node[right] {$(1, u${\tiny $_{1,2}$}$)$} (11);
		\path [->] (2k2) edge [out=100,in=-45] node[right] {$(1, u${\tiny $_{1,k_2}$}$)$} (11);
		\path [->] (22) edge  [out=160,in=20] node[pos=.45, above] {$(1, u${\tiny $_{2,2}$}$)$}  (12);
		\path [->] (2k2) edge [out=160,in=20] node {$(1, u${\tiny $_{k_1,k_2}$}$)$}  (1k1);
		
		\path [->] (21) edge [out=200,in=-15] node[above] {$(0, u'${\tiny $_{1,1}$}$)$} (11);
		\path [->] (22) edge [out=200,in=-15] node {$(0, u'${\tiny $_{2,2}$}$)$}  (12);
		\path [->] (2k2) edge [out=200,in=-15] node {$(0, u'${\tiny $_{k_1,k_2}$}$)$}  (1k1);
		\path [->] (2k2) edge [out=135,in=-75] node[left] {$(0, u'${\tiny $_{1,k_2}$}$)$} (11);
		
		\end{scope}
		
		\end{tikzpicture}
		\caption{Min-cost network flow graph after arc reversal.}
		\subcaption*{arc legend: (cost, upperbound); node legend: (supply)}
		\label{fig:mcnf2}
	\end{figure}

	The successive shortest path algorithm iteratively selects a node $s$ with excess supply (supply not yet sent to some demand node) and a node $t$ with unfulfilled demand and sends flow from $s$ to $t$ along a shortest path in the residual network \cite{Jewell58}, \cite{Iri60}, \cite{BG61}. The algorithm terminates when the flow satisfies all the flow balance constraints. Since at each iteration, the number of remaining units of supply to be sent is reduced by at least one unit, the number of iterations is bounded by the total amount of supply.  For the network in \cref{fig:mcnf2} the total supply is $n$.
	
	At each iteration, the shortest path can be solved with Dijkstra's algorithm of complexity $O(|A| + |V| \log |V|)$, where $|V|$ is number nodes and $|A|$ is number of arcs \cite{Tomizawa71}, \cite{EK72}. In our formulation, $|V|$ is $O\left(k_1+k_2\right)$, which is at most $O(n)$. 
	Since the number of nonempty sets $\lT_{1,i_1} \cap \lT_{2,i_2}$ is at most $\min\{n, k_1k_2 \}$, the number of unit-cost arcs is $O(\min\{n, k_1k_2 \})$.
	Since the number of nonempty sets $\lC_{1,i_1} \cap \lC_{2,i_2}$ is at most $\min\{n', k_1k_2 \}$, the number of zero-cost arcs is $O(\min\{n', k_1k_2 \})$. So the total number of arcs $|A|$ is $O(\min\{n+n', k_1k_2 \})$.
	
	Hence, the total running time of applying the successive shortest path algorithm on our formulation is $O(n\cdot (\min\{n+n', k_1k_2 \} + (k_1+k_2)\log (k_1+k_2))$.
\end{proof}

In contrast to the $2$-covariate \maxB problem which is polynomial time solvable, we show next that the $2$-covariate $\kappa$-\maxB problem is NP-hard when $\kappa\geq 3$.

\subsection{NP-hardness for the $2$-covariate $\kappa$-\maxB problem with $\kappa\geq3$}\label{sec:NPkappa}

We prove here that the $2$-covariate $\kappa$-\maxB problem is NP-hard for all constant values of $\kappa$ such that $\kappa\geq 3$. The proof reduces the {\em exact 3-cover} problem, which is NP-hard \cite{KARP72}\cite{GJ79}.
\\
{\em Exact 3-cover}: Given a collection $C$ of 3-element subsets (triplets) of a ground set $E$ with $|E|=3q$ for some integer $q$, is there a subcollection $C'\subseteq C$ where each element $e\in E$ appears in exactly one triplet of $C'$?

\begin{theorem}\label{thm:npP2k3} The $2$-covariate $\kappa$-\maxB problem is NP-hard for any constant $\kappa\geq3$.
\end{theorem}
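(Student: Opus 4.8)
The plan is to reduce from \emph{Exact 3-cover}, following the same overall template as the proof of \cref{thm:npP3} but now using only two covariates, paying for the loss of the third covariate by exploiting the slack $\kappa\geq 3$. Given a collection $C$ of triplets over a ground set $E=\{e_1,\dots,e_{3q}\}$, I would build a $2$-covariate $\kappa$-\maxB instance in which covariate $1$ and covariate $2$ each have one level per ground-set element (plus a few auxiliary ``padding'' levels, as in \cref{thm:npP3}). The treatment group will contain one distinguished sample sitting in a level-intersection that can only be matched to the control side at the right multiplicity if the corresponding element is covered exactly once, and padding treatment samples that force all auxiliary control samples to be selected. For each triplet $t=\{e_a,e_b,e_c\}\in C$ I would create a small gadget of control samples whose level-intersections under the two covariates encode the three pairs $(e_a,e_b),(e_b,e_c),(e_c,e_a)$ (a ``cycle'' on the three elements), so that selecting the gadget contributes exactly one unit to each of $e_a,e_b,e_c$ under \emph{both} covariates — this is the standard trick for simulating a $3$-ary constraint with binary incidences, and it is the reason $\kappa\ge 3$ is needed: the padding control samples supply $\kappa-1$ units to each element-level, leaving exactly one unit to be contributed by a triplet gadget, which forces that at most one triplet through each element is picked.

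Concretely, the key steps in order are: (i) define the covariate levels — a level for each $e_i$ under covariate $1$, a level for each $e_i$ under covariate $2$, plus constant-many padding levels $0,0'$ as in \cref{thm:npP3}; (ii) define the treatment group so that its total size equals a target $T$, with one ``witness'' sample per ground-set element and appropriately many padding samples; (iii) define the control group: a triplet gadget for each $t\in C$, plus $(\kappa-1)$ copies of the control samples that pre-load each element-level, plus padding control samples in the $0,0'$ levels matching the treatment padding; (iv) verify that the reduction is polynomial in $|C|+|E|$; (v) prove the equivalence: if an exact cover $C'$ exists, select all treatment samples, all gadgets for $t\in C'$, and all padding and pre-loading control samples, and check that under each covariate each element-level has exactly one selected treatment sample and exactly $\kappa$ selected control samples and each padding level is balanced — so the objective $T$ is achieved; conversely, if the optimum equals $T$ then all treatment samples are selected, forcing (by a counting argument on the padding levels, exactly as in \cref{thm:npP3}) all padding and pre-loading control samples to be selected, which leaves exactly one free unit per element-level, forcing the selected gadgets to cover each element exactly once — i.e.\ to form an exact $3$-cover; (vi) conclude NP-hardness for every constant $\kappa\ge 3$, and optionally extend to $P\ge 2$ as in \cref{cor:np3cov}.

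The main obstacle I anticipate is the design of the triplet gadget so that it is \emph{all-or-nothing}: selecting a proper nonempty subset of a gadget's control samples must never help, so that a feasible selection achieving the target corresponds cleanly to a choice of whole triplets. The ``cycle'' encoding $(e_a,e_b),(e_b,e_c),(e_c,e_a)$ is attractive because each of the three elements appears in exactly two of the three pairs, which should make any partial selection of a gadget create an imbalance under some covariate; but one must check this carefully against the available slack from other gadgets and from the pre-loading samples, and possibly add a dedicated pair of padding levels per triplet (or a shared ``garbage collector'' level) to absorb unwanted partial selections. A secondary point is bookkeeping: choosing the treatment-side multiplicities and the number of pre-loading copies so that the element-levels end up needing \emph{exactly} $\kappa$ control samples, with $\kappa-1$ already spoken for — this is where the hypothesis $\kappa\ge 3$ (as opposed to $\kappa=2$, left open) actually enters, since the gadget needs room for the three overlapping pairs without the pre-loading collapsing the constraint.
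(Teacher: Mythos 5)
There is a genuine gap, and it sits exactly where you flagged your ``main obstacle'': the all-or-nothing property of the triplet gadget. In your construction the element levels of the two covariates are the only places where a triplet gadget interacts with the counting constraints, and after the $\kappa-1$ pre-loading copies are accounted for, what remains to be enforced is that each element level receives exactly one more selected control sample under covariate~1 and exactly one more under covariate~2. Viewing each gadget control sample $(e_a,e_b)$ as a directed arc $e_a\to e_b$ on the element set, a feasible completion is precisely a set of arcs in which every element has out-degree $1$ and in-degree $1$, i.e.\ a cycle cover of the element digraph. Your gadget for $t=\{e_a,e_b,e_c\}$ contributes the $3$-cycle $e_a\to e_b\to e_c\to e_a$, but nothing forces the chosen cycle cover to decompose into \emph{whole} gadget cycles: if three different gadgets happen to supply arcs $e_1\to e_2$, $e_2\to e_3$, $e_3\to e_1$, those three arcs form a feasible $3$-cycle using one control sample from each of three gadgets, balancing $e_1,e_2,e_3$ without any triplet being selected in its entirety. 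So the backward direction of the equivalence fails: a selection achieving the target objective need not encode an exact $3$-cover. Your parenthetical fix --- ``a dedicated pair of padding levels per triplet'' --- is not an optional refinement; it is the essential idea, and it is exactly what the paper's proof does: covariate~1 is given a \emph{dedicated level per triplet} $T$ containing one treatment sample $(T,X)$ and exactly $\kappa$ control samples ($(T,e''_{t_1}),(T,e''_{t_2}),(T,e''_{t_3})$ plus $\kappa-3$ fillers $(T,X)$), so the $0$-or-$\kappa$ divisibility forced by the single treatment sample in that level makes the gadget genuinely all-or-nothing. Only the element levels of covariate~2 (one level $e''$ per element, with one treatment sample $(e',e'')$, $\kappa-1$ pre-loaded copies, and the gadget contributions) then carry the ``covered exactly once'' constraint.

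A secondary inaccuracy: your explanation of where $\kappa\geq 3$ enters ($\kappa-1$ pre-loaded units leaving one free unit per element level) works just as well for $\kappa=2$, so it cannot be the reason the theorem excludes $\kappa=2$. In the paper's construction the hypothesis $\kappa\geq 3$ is used to make the dedicated triplet level come out to exactly $\kappa$ control samples: the three element-labelled samples plus $\kappa-3\geq 0$ filler copies of $(T,X)$. For $\kappa=2$ this padding is impossible and the problem is left open. I would also note that the paper's target value is $4q$ out of $|C|+3q$ treatment samples (so not all treatment samples are selected, and a short counting argument is needed to get $|C'|\geq q$ before disjointness is argued); your sketch assumes all treatment samples are selected, which would require a different calibration of the treatment group size.
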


\begin{proof}
	Given an instance of Exact 3-cover problem with ground set $E$ of size $3q$ and a collection of triplets $C$, we construct an instance of  $2$-covariate $\kappa$-\maxB problem for any constant integer $\kappa\geq 3$.
	
	First we define the levels of the two covariates.
	For covariate 1, we have a level $T$ for each triplet $T\in C$, and a level $e'$ for each element $e\in E$. So there are $|C|+|E|$ levels of covariate 1.
	For covariate 2, we have a level $e''$ for each element $e\in E$, and one additional level dentoed as $X$. So there are $|E|+1$ levels of covariate 2.
	
	Next, we construct the samples. For each sample, we represent it by the ordered pair $(a,b)$ where $a$ is the level of the first covariate and $b$ is the level of the second covariate. The treatment group contains a sample $(T,X)$ for each triplet $T\in C$ and a sample $(e',e'')$ for each element $e\in E$.
	Moreover, for each triplet $T\in C$, whose elements are denoted by $e_{t_1}, e_{t_2}, e_{t_3}$, we create three control samples $(T,e_{t_1}''), (T,e_{t_2}''), (T,e_{t_3}'')$, as well as $(\kappa-3)$ copies of control sample $(T,X)$. In addition, for each element $e\in E$ we have one control sample $(e',X)$ and $(\kappa-1)$ copies of control sample $(e',e'')$.
	The treatment group constructed is of size $|C|+|E|$ and the control group constructed is of size $\kappa|C|+\kappa|E|$. The two sizes are both polynomially bounded in the size of the Exact 3-cover instance so the reduction can be computed in polynomial time.
	
	Finally, we claim that the constructed  $2$-covariate $\kappa$-\maxB problem has a feasible solution with objective value of at least $4q$ if and only if the Exact 3-cover instance has a subcollection $C'\subseteq C$ such that each element $e\in E$ appears in exactly one triplet of $C'$.
	
	Let $C'$ be a subcollection such that each element $e\in E$ appears in exactly one triplet of $C'$, we derive a solution for the constructed problem as follows. In the treatment group, we choose $(T,X)$ for all $T\in C'$, and $(e',e'')$ for all $e\in E$. In the control group, for each $T\in C'$,  whose elements are denoted by $e_{t_1}, e_{t_2}, e_{t_3}$, we choose $(T,e_{t_1}''), (T,e_{t_2}''), (T,e_{t_3}'')$ and the $(\kappa-3)$ copies of $(T,X)$. Additionally, we also choose from the control group sample $(e',X)$ and $(\kappa-1)$ copies of $(e',e'')$ for each $e\in E$. That is, the selection of treatment samples is
	$$ S=\{  (T,X):\forall T\in C'  \}  \cup \{(e',e''): \forall e\in E \} $$
	and the selection of control samples is
	\begin{align*}
		S'=&\{ (T,e_{t_1}''), (T,e_{t_2}''), (T,e_{t_3}''):\forall T=(e_{t_1},e_{t_2},e_{t_3})\in C'  \}  \cup \{ (T,X)\sw):\forall T\in C', w=1,...,\kappa-3  \} \cup  \\
	&  \{(e',X): \forall e\in E \}  \cup \{(e',e'')\sw: \forall e\in E, w=1,...,\kappa-1 \} .
	\end{align*}
	To check the feasibility of this solution, first consider the levels of the first covariate. For each $T\in C$, if $T\in C'$ then there is exactly one treatment sample with level $T$ in $S$ and we choose exactly $\kappa$ such control samples for $S'$; and if $T\notin C'$, then we have not chosen any sample of this level for neither the treatment nor the control group. Furthermore, for each $e\in E$, we choose exactly one sample from the treatment group with covariate 1 level $e'$ and exactly $\kappa$ control samples with covariate 1 level $e'$.  Next, consider the levels of the second covariate.  We choose $|C'|$ number of level $X$ treatment samples and $(\kappa-3)|C'|+|E|$ number of level $X$ control samples. Note that the size of subcollection $C'$ must be $q$ as each element in $E$ appears in exactly one triplet of $C'$, so $(\kappa-3)|C'|+|E|=\kappa q=\kappa |C'|$.  For every $\bar{e}\in E$, we choose exactly one treatment sample with level $\bar{e}''$. There are $\kappa-1$ control samples of level $\bar{e}''$ under covariate 2 in the set $\{(e',\bar{e}'')\sw: \forall e\in E, w=1,...,\kappa-1 \}$. Since each element $\bar{e}$ appears in exactly one triple of $C'$, $\bar{e}''$ appears exactly once in the set $ \{ (T,e_{t_1}''), (T,e_{t_2}''), (T,e_{t_3}''):\forall T=(e_{t_1},e_{t_2},e_{t_3})\in C'  \}  $. So there are $\kappa$ control samples of level $\bar{e}''$ under covariate 2 in the selection $S'$. Therefore, this solution is feasible and the objective value of this solution is  $|S|=|C'|+|E|=4q$.

	On the other hand, if the constructed $2$-covariate $\kappa$-\maxB problem has a feasible solution $S,S'$ of objective value at least $4q$, we say that $T\in C$ is selected for subcollection $C'$ if the treatment sample $(T,X)$ is selected in $S$ for the constructed problem.  We will show that the subcollection of selected triplets is a feasible solution of the Exact 3-cover problem. Since there are only $3q$ samples in the treatment group that do not correspond to a triplets, the size of $C'$ must be at lease $q$. So the subcollection is feasible if we can establish that every pair of selected triplets is disjoint.  Assume by contradiction that there exist two selected subsets $T,T'\in C'$ that have a common element $e$.
	Due to $(S,S')$-$\kappa$-fine-balance, we know the number of control samples in each level under any covariate must be an integer multiple of $\kappa$.
	Since there is only one sample of level $\bar{e}''$ in the treatment group, in selection $S'$ there can be either $0$ or $\kappa$ samples of level $\bar{e}''$ under covariate 2.  Since both $(T,X)$ and $(T',X)$ are in the selection $S$, control samples $(T,\bar{e}'')$ and $(T',\bar{e}'')$ must be chosen in $S'$, otherwise the number of selected samples in covariate 1 levels $T$ and $T'$ will not satisfy $(S,S')$-$\kappa$-fine-balance. Thus, the number of samples in $S'$ with the second covariate being $\bar{e}''$ is at least $2$. So the number $|S'\cap \lC_{2,\bar{e}''}|$ must be $\kappa$, the number $|S\cap \lT_{2,\bar{e}''}|$ must be $1$,  and the number of sample $(\bar{e}',\bar{e}'')$ in selection $S'$ must be less than or equal to $\kappa-2$.  This implies that the number of control samples of level $\bar{e}'$ under covariate 1 can not be more than $\kappa-1$, which means, it must be zero. So we can derive that treatment sample $(\bar{e}',\bar{e}'')$ is not selected, which means the number of treatment samples in level $e''$ under covariate 2 is $0$, contradicts with $|S\cap \lT_{2,e''}|=1$.
\end{proof}

\section{The $\kappa$-balanced-matching ($\kappa$-BM) problem with $P\geq2$}\label{sec:kBM}

The $1$-covariate $\kappa$-BM problem is solvable in polynomial time \cite{Rosenbaum07}. However, we show here for the first time that even for any constant $\kappa$ the $\kappa$-BM problem is NP-hard for two or more covariates, $P\geq 2$. 
For the $2$-covariate BM problem and the $2$-covariate $\kappa$-BM problem, the complexity status when the numbers of levels of both covariates are constants is discussed in \cref{sec:FPT}  together with the other two families of problems. Here we consider an intermediate case where only one of the covariates has a constant number of levels.
We will show that the $2$-covariate BM problem and the $2$-covariate $\kappa$-BM problem when the second covariate has a constant number of levels can be solved efficiently if and only if the exact matching problem on bipartite graphs can be solved efficiently. 

\subsection{The hardness of the $P$-covariate BM problem and the $P$-covariate $\kappa$-BM problem for $P\geq2$} 
We will first present the integer programming formulation for the $2$-covariate $\kappa$-BM problem. Denote the distance between the $i$th treatment sample and the $j$th control sample by $\delta_{ij}$ for $i=1,...,n$ and $j=1,...,n'$.
The decision variable $x_{ij}$ is a binary variable which equals $1$ if and only if the $i$th treatment sample is matched to the $j$th control sample.
The integer programming formulation is given below.

\begin{subequations}\label{BM}
	\begin{align}
	& \min& \sum_{i=1}^n\sum_{j=1}^{n'} \delta_{ij} x_{ij} &&\label{objd} \\
	&\text{s.t.} &  \sum_{i=1}^n\sum_{j\in \lC_{1,q}} x_{ij} = \kappa \ell_{1,q}  &&   q=1,\ldots,k_1
	\label{cons:ab1} \\
	&&  \sum_{i=1}^n\sum_{j\in \lC_{2,q}} x_{ij} =\kappa \ell_{2,q}  &&   q=1,\ldots,k_2
	\label{cons:ab2} \\
	&& \sum_{j=1}^{n'} x_{ij} = \kappa &&  i=1,...,n \label{cons:sumt}  \\
	&& \sum_{i=1}^{n} x_{ij} \leq 1 &&  j=1,...,n' \label{cons:sumc}  \\
	&& x_{ij}\in\{0, 1\} &&  i=1,...,n , \quad j=1,\ldots,n'.
	\end{align}
\end{subequations}

The objective (\ref{objd}) is the total distance of all matched pairs.
Constraints (\ref{cons:ab1}) ensure that the number of matched level $q$ control samples is $\kappa$ times as large as the number of matched level $q$ treatment samples under covariate 1 for each level $q$. Constraints (\ref{cons:ab2}) ensure the $\kappa$-fine-balance requirement under covariate 2. Constraints (\ref{cons:sumt}) assign each treatment sample to $\kappa$ control samples, and constraints (\ref{cons:sumc}) specify that each control sample can not be matched to more than one treatment sample.

Next we prove that the $2$-covariate BM problem is NP-hard by reducing the {\em 3-SAT} problem, one of Karp's 21 NP-hard problems \cite{KARP72}\cite{GJ79}.
\\
{\em 3-SAT}: Given clauses $\cC_1, \cC_2, ..., \cC_m$, each consisting of 3 literals from the set $\{ v_1, v_2, ..., v_n\} \cup \{ \bar{v}_1, \bar{v}_2, ..., \bar{v}_n \}$. Is the conjunction of the given clauses satisfiable?

\begin{theorem}\label{thm:np2BM} The $2$-covariate BM problem is NP-hard.
\end{theorem}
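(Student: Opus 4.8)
\emph{Proof plan.} The plan is to reduce 3-SAT to the decision version of the $2$-covariate BM problem. Given clauses $\cC_1,\ldots,\cC_m$ over variables $v_1,\ldots,v_n$, I will build a $2$-covariate BM instance in which every distance is either $0$ or $1$, and ask whether some fine-balanced assignment has total cost $0$; equivalently, whether $\gT$ can be matched into the control group using only the cost-$0$ (``permitted'') treatment--control pairs so that the set $S'$ of matched controls satisfies $(\gT,S')$-fine-balance on both covariates. It is worth noting at the outset why the costs are essential: if all distances were equal, then --- since $\gT$ must be matched in full --- a feasible fine-balanced assignment exists iff one can pick integers $x'_{i_1,i_2}\le u'_{i_1,i_2}$ with the prescribed covariate-$1$ and covariate-$2$ margins, i.e.\ a transportation feasibility problem, hence polynomial (compare \cref{sec:prelim}). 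So the entire hardness must be created through the permitted-pair bipartite graph and the level structure.

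\emph{Variable gadgets.} For each variable $v_i$ I create a treatment sample whose only two permitted partners are two control samples $c_i^{+}$ and $c_i^{-}$; which of the two is used in the assignment encodes the truth value of $v_i$. These two controls lie in different levels of covariate~$1$ and in different levels of covariate~$2$, so the choice is visible to both families of fine-balance equalities, and auxiliary (``absorber'') treatment and control samples are added so that whichever of $c_i^{\pm}$ is not used can still be consumed exactly and all level counts stay tight. To force the occurrences of $v_i$ in the clauses to respect this one choice, each occurrence of a literal of $v_i$ is given its own degree-$2$ treatment sample whose two permitted partners stand for ``this occurrence is satisfied'' and ``this occurrence's literal is false''; these occurrence-gadgets are linked in series to the master gadget of $v_i$ (through a path-type subgadget of permitted pairs whose only $\gT$-saturating matchings leave a single designated control exposed) so that the only local configurations compatible with the prescribed level counts are ``all positive occurrences satisfied and all negative ones false'' or its mirror image.

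\emph{Clause gadgets and correctness.} For each clause $\cC_j$ I introduce a dedicated covariate-$2$ level together with a treatment sample whose three permitted partners are the ``satisfied'' controls of the three literal-occurrences in $\cC_j$; the prescribed count of that level is set so that at least one of those three controls must be used, i.e.\ at least one literal of $\cC_j$ is true. Padding treatment and control samples are added so that $|\gT|$ equals the number of controls that must be used and so that every one of the $k_1+k_2$ prescribed level counts is met with equality exactly when the configuration corresponds to a genuine truth assignment. One then checks both directions: a satisfying assignment yields permitted-pair matches realizing the corresponding configuration and meeting every count, hence a cost-$0$ assignment; conversely, any cost-$0$ assignment must, by the degree-$2$ structure and the series-linking, induce a well-defined truth value for every variable, and the clause-level counts force each clause to be satisfied. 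The construction has size polynomial (indeed linear) in $n+m$ and is computable in polynomial time, so NP-hardness of the decision version, hence of the $2$-covariate BM optimization problem, follows; adding $\kappa-1$ further permitted partners to each treatment sample and scaling the prescribed counts by $\kappa$ gives the same reduction for the $2$-covariate $\kappa$-BM problem, for every constant $\kappa$.

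\emph{Main obstacle.} The difficulty is concentrated in the gadget design. With only two covariates there are only two families of equality constraints, and a single linear equality cannot by itself express a disjunction such as ``$v_i$ is true or false'' or ``some literal of $\cC_j$ holds''; the disjunctive behaviour has to be produced by the combinatorics of the permitted-pair graph --- degree-$2$ treatment vertices and path-type subgadgets whose only $\gT$-saturating matchings come in exactly two ``states'' --- while the two covariate profiles must be chosen so as to be simultaneously globally consistent (so that a satisfying assignment really does yield a feasible fine-balanced assignment) and tight enough to exclude every ``mixed'' configuration. Engineering the levels and the absorber/padding samples so that all $k_1+k_2$ prescribed counts hold with equality precisely for assignment-configurations is the delicate part of the proof; the remainder is bookkeeping.
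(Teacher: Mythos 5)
Your high-level strategy coincides with the paper's: reduce from 3-SAT, make all distances $0$ or $1$, and ask whether a zero-cost fine-balanced assignment of all of $\gT$ exists; variable gadgets with exactly two zero-cost ``states'' propagated along a chain, and clause gadgets realized by a dedicated covariate-$2$ level containing three treatment samples and five controls so that at least one literal-control must be matched. Your opening observation that the cost-free feasibility question is a transportation problem, so the hardness must live in the permitted-pair structure, is correct and is a good sanity check.

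The problem is that what you have written is a specification of gadgets, not a construction of them, and the entire content of the theorem is in the construction. Every load-bearing object is introduced only by the property it is supposed to have: ``auxiliary absorber samples are added so that \dots all level counts stay tight,'' ``a path-type subgadget of permitted pairs whose only $\gT$-saturating matchings leave a single designated control exposed,'' ``padding samples are added so that \dots every one of the $k_1+k_2$ prescribed level counts is met with equality exactly when the configuration corresponds to a genuine truth assignment.'' You correctly identify in your last paragraph that engineering these so that the two covariate profiles are simultaneously globally consistent and tight enough to kill all mixed configurations is the delicate part --- but that is precisely the part that is missing, so there is no proof to check. Two concrete points where your sketch diverges from something known to work: (i) you place the two competing controls $c_i^{+},c_i^{-}$ in \emph{different} levels of both covariates and rely on absorbers to enforce that exactly one is used, whereas the paper's gadget enforces exclusivity directly by putting each competing pair of controls in the \emph{same} covariate-$1$ level together with exactly one treatment sample (so fine balance itself forces exactly one of the pair to be matched); it is not evident that your arrangement can be made to work, and nothing in the proposal verifies it. (ii) The global bookkeeping is genuinely nontrivial --- the paper needs an extra ``dump'' covariate-$2$ level and a careful count of how many of $h_1'^{(w)},h_2'^{(w)}$ are matched as a function of the number of satisfied literals in clause $\cC_w$ to make the reverse direction (satisfying assignment $\Rightarrow$ zero-cost matching) go through; asserting that ``the remainder is bookkeeping'' does not discharge it. Until the gadgets are exhibited and both directions of the equivalence are actually verified against them, the argument is an outline of the paper's proof rather than a proof.
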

	\begin{proof}
	Given an instance of 3-SAT problem with clauses $\cC_1, \cC_2, ..., \cC_m$ and variables $ v_1, v_2, ..., v_n$, we construct an instance of \MFB. We define two types of gadgets: the ``variable'' gadgets, one for each variable, and the ``clause'' gadgets, one for each clause.
	The distances between the treatment samples and the control samples are set to $0$ or $1$.
	We construct a \MFB\ such that the 3-SAT problem is satisfiable if and only if the optimal objective value is $0$. Moreover, the zero-distance matching implies the values of the 3-SAT variables in a truth assignment.
	
	First, we define the variable gadgets. Consider the $j$th variable $v_j$ that appears $p_j$ times as $v_j$ and $q_j$ times as $\bar{v}_j$.
	We have the following $2p_j+2q_j-2$ control samples in the $j$th variable gadget: $a_i\sj$ for $i=0,1,2,...,q_j-1$, $b_i\sj$ for $i=1,2,...,q_j-1$, $c_i\sj$ for $i=0,1,2,...,p_j-1$, and $d_i\sj$ for $i=1,2,...,p_j-1$. The $p_j+q_j-1$ treatment samples in this $v_j$ gadget are: $e_i\sj$ for $i=0,1,...,q_j-1$ and $f_i\sj$ for $i=1,...,p_j-1$. (We also call the $e_0\sj$ sample $f_0\sj$ for simplicity.)
	
	There are $p_j+q_j-1$ covariate 1 levels associated to the $v_j$ gadget, each one consists of three samples: a pair of control samples out of the $2p_j+2q_j-2$ control samples, and one treatment sample out of the $p_j+q_j-1$ treatment samples. The pairs of control samples that belong to the same levels are listed as follows:
	$\{a_i\sj,b_{i+1}\sj\}$ for $i=0,1,...,q_j-2$, $\{ c_i\sj,d_{i+1}\sj\}$ for $i=0,1,...,p_j-2$ and $\{ a_{q_j-1}\sj,c_{p_j-1}\sj \}$. Since each of these levels has one treatment sample, exactly one control sample out of those pairs need to be matched.
	
	The zero-distance pairs for the $v_j$ gadget are $[e_i\sj,a_i\sj]$ for $i=0,1,...,q_j-1$, $[e_i\sj,b_i\sj]$ for $i=1,2,...,q-1$, $[f_i\sj,c_i\sj]$ for $i=0,1,...,p_j-1$ and $[f_i\sj,d_i\sj]$ for $i=1,2,...,p_j-1$. Each treatment sample has two potential matches with zero distance, and each control samples has one potential matches with zero distance.
	Note that sample $e_0\sj=f_0\sj$ must be matched to either $a_0\sj$ or $c_0\sj$ in a zero-distance matching. So there are two possible zero-distance matchings:
	\begin{itemize}
		\item Case 1 ($e_0\sj$ is matched to $a_0\sj$): $e_i\sj$ is matched to $a_i\sj$ for $i=0,1,2,...,q_j-1$, $f_i\sj$ is matched to $d_i\sj$ for $i=1,...,p_j-1$; samples $\{b_i\sj:i=1,2,...,q_j-1\}$ and $\{c_i\sj:i=0,1,2,...,p_j-1\}$ are unmatched;
		\item Case 2 ($e_0\sj=f_0\sj$ is matched to $c_0\sj$): $f_i\sj$ is matched to $c_i\sj$ for $i=0,1,2,...,p_j-1$, $e_i\sj$ is matched to $b_i\sj$ for $i=1,...,q_j-1$; samples $\{a_i\sj:i=0,2,...,q_j-1\}$ and $\{d_i\sj:i=1,1,2,...,p_j-1\}$ are unmatched;
	\end{itemize}
	Consider the first case in which $e_0\sj$ is matched to $a_0\sj$. For $i=0,...,q_j-2$, if treatment sample $e_i\sj$ is matched to $a_i\sj$, as only one control sample from the same level can be matched, $b_{i+1}\sj$ must not be matched. Then as the two only zero-distance matches for $e_{i+1}\sj$ are $a_{i+1}\sj$ and $b_{i+1}\sj$, we infer that $e_{i+1}\sj$ must be matched to $a_{i+1}\sj$. By such induction, we know all samples in $\{a_i\sj:i=0,1,2,...,q_j-1\}$ are matched, and samples in $\{b_i\sj:i=1,2,...,q_j-1\}$ are unmatched.
	Since the only zero-distance match of $c_0\sj$ is $e_0\sj$, $c_0\sj $is unmatched in Case 1. For $i=0,...,p_j-2$, if sample $c_i\sj$ is unmatched,
	as one control sample need to be matched in each level, we can infer that sample $d_{i+1}\sj$ is matched to its only zero-distance pair $f_{i+1}\sj$. Therefore, $c_{i+1}\sj$ can not be matched as its only zero-distance pair is taken. By such induction, we know samples in $\{c_i\sj:i=0,1,2,...,p_j-1\}$ are unmatched, and all samples in $\{d_i\sj:i=1,2,...,p_j-1\}$ are matched.
	With similar arguments, in the second case in which  $e_0\sj=f_0\sj$ is matched to $c_0\sj$, all samples in $\{b_i\sj:i=1,2,...,q_j-1\}\cup \{c_i\sj:i=0,1,2,...,p_j-1\}$ are matched, and samples in $\{a_i\sj:i=0,2,...,q_j-1\}\cup\{d_i\sj:i=1,1,2,...,p_j-1\}$ are unmatched.
	In Case 1, we say that we assign variable $v_j$ the value FALSE, and in Case 2 we say that we assign variable $v_j$ the value TRUE.
	
	Next, consider the clause gadgets.
	For clause $\cC_w$ that consists of variables $v_{w_1},v_{w_2},v_{w_3}$,
	we pick three samples without replacement from the variable gadgets that correspond to $v_{w_1},v_{w_2},v_{w_3}$ as follows. If the variable appears as literal $v_{j(w)}$ in the clause, we pick one of the samples of the type $c_i\sjw$ from the $v_{j(w)}$ gadget; if  the variable appears as literal $\bar{v}_{j(w)}$ we pick one of the samples of the type $a_i\sjw$ from the $v_{j(w)}$ gadget.
	Since for each occurrence of  literal $v_j$ there is a type $c_i\sj$ sample, and for each occurrence of  literal $\bar{v}_w$ there is a type $a_i\sj$ sample in the $v_j$ gadget, we can ensure that there are enough samples to be selected from without replacement.
	The clause gadget is then augmented by eight new samples used as ``garbage collectors'': three treatment samples $g_1\sw, g_2\sw, g_3\sw$ and five control samples $h_1\sw, h_2\sw, h_3\sw, h_1\dw, h_2\dw$. The fifteen pairs of distances between the three treatment samples and the five control samples are set to zero.
	
	We introduce a new covariate 1 level that includes all of the eight new samples, so that it is disjoint from the other levels created for the variable gadgets. Since there are three treatment samples in this level, three of the control samples $h_1\sw, h_2\sw, h_3\sw, h_1\dw, h_2\dw$ have to be matched.
	
	We introduce a new covariate 2 level which consists of five control samples: $h_1\dw, h_2\dw$ and the three previously selected samples of type $a_i$ or $c_i$ in the $v_{w_1},v_{w_2},v_{w_3}$ gadgets. We also set the three treatment samples $g_1\sw, g_2\sw, g_3\sw$ to be in this covariate 2 level.  Therefore, three out of the five control samples of this level must be matched. Observe that any three control samples in this level must contain at least one sample of type $a_i$ or $c_i$ from a variable gadget. That means, the clause is satisfied by our variable assignment rule above: suppose the matched literal appears as $v_{j(w)}$ and the sample selected for this clause gadget is $c_i\sjw$, sample $c_i\sjw$ is matched so variable $v_{j(w)}$ must be assigned the value TRUE from our previous discussion; and suppose the matched literal appears as $\bar{v}_{j(w)}$ and the sample selected for this clause gadget is $a_i\sjw$, sample $a_i\sjw$ is matched so  variable $v_{j(w)}$ must be assigned the value FALSE.
	
	For all remaining samples that have not been assigned to a covariate 2 level in the above discussion, i.e. the treatment samples in the variable gadgets and the control samples of type $h_1\sw, h_2\sw, h_3\sw$,  we create a ``dump'' covariate 2 level that includes all of them.
	
	With the construction above, a zero value matching solution implies a  truth assignment of the 3-SAT problem. The values of the 3-SAT variables are determined by which of the two zero-distance matchings is in each variable gadget. For each clause gadget, at least one of the type $a_i$ or $c_i$ sample is matched, which implies the clause is satisfied by our variable assignment rule.
	
	On the other hand, given a truth assignment of the 3-SAT problem, there is a zero-distance matching for the constructed problem. First match the samples in the variable gadgets as follows: if the variable takes value TRUE, then match the samples as described above in Case 2; if it takes value FALSE, then match the samples as described in Case 1. By doing so, the number of matched control samples under the covariate 1 level associated to each variable gadget, equals the number of treatment samples in that level.
	Next, for each clause gadget, if there is one, respectively two, respectively three satisfied literals, then match two, respectively one, respectively zero out of $\{h'_1, h'_2\}$ to $g_1, g_2$. That will match exactly three samples of the corresponding covariate 2 level as required.
	Finally, for $w=1,...,m$, the remaining unmatched treatment samples in clause $\cC_w$ are matched to control samples out of  $\{h_1\sw, h_2\sw,h_3\sw\}$. Since the three treatment samples $g_1\sw, g_2\sw, g_3\sw$ are matched to the control samples in $\{h_1\sw, h_2\sw,h_3\sw,h_1\dw, h_2\dw\}$, the numbers of matched samples in the associated covariate 1 level are three for both the treatment and control sides. For either the treatment or control side, the number of matched samples in the dump level under covariate 2 is the total number of matched samples minus the number of matched samples in other covariate 2 levels. As we have shown that the number of matched samples in the covariate 2 level associated with each clause gadget are three for both sides, the numbers of matched samples in the dump level also equal to each other for the two groups. So this is a zero-distance matching for the constructed \MFB.

\end{proof}

This proof of \cref{thm:np2BM} can be extended to the  $2$-covariate $\kappa$-BM problem for any constant $\kappa$.

\begin{corollary}\label{cor:np2kBM} The $2$-covariate $\kappa$-BM problem is NP-hard for any constant $\kappa$.
\end{corollary}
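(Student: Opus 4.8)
The plan is to extend the reduction in the proof of \cref{thm:np2BM}, keeping its overall architecture --- a $2$-covariate instance assembled from variable gadgets and clause gadgets with all pairwise distances in $\{0,1\}$, for which the underlying $3$-SAT formula is satisfiable if and only if the optimal matching has cost $0$ --- and modifying it so that the stronger requirement of matching \emph{every} treatment sample to \emph{exactly} $\kappa$ control samples does not loosen the gadgets. The natural first step is, for every treatment sample $t$ of the $\kappa=1$ instance, to introduce $\kappa-1$ new ``private'' control samples, each at distance $0$ to $t$ and at distance $1$ to every other treatment sample, and to place all of them in the same covariate-$1$ level and the same covariate-$2$ level as $t$. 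Under this change, a covariate level that formerly contained $m$ treatment samples and forced exactly $m$ matched control samples now contains $m$ treatment samples, $(\kappa-1)m$ extra control samples, and forces $\kappa m$ matched control samples, so that the intended zero-cost solution keeps exactly $m$ of the \emph{original} control samples matched in each level; this already gives one direction, since a satisfying assignment yields (by \cref{thm:np2BM}) a zero-cost $\kappa=1$ matching, which extends to a feasible zero-cost solution of the $2$-covariate $\kappa$-BM instance by additionally matching each treatment sample to its $\kappa-1$ private control samples.

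I expect the converse to be the main obstacle. The trouble is that a treatment sample now has $\kappa\geq 2$ matching slots, so it could spend two of them on its two original zero-distance partners --- for instance matching some $e_i$ to both $a_i$ and $b_i$ inside a variable gadget --- which destroys the invariant ``exactly one control of each pair is matched'' on which the propagation of truth values through the variable gadgets rests; in fact a direct check shows that the bare ``private control'' enlargement described above admits a spurious zero-cost solution of this kind regardless of satisfiability, so it must be reinforced. The repair I would pursue is to attach to each covariate level, in place of the inert private control samples, a small self-contained auxiliary gadget: a constant-size set of fresh treatment and control samples, again with distances in $\{0,1\}$, whose zero-cost $\kappa$-matching is unique and forced and which deposits exactly $\kappa-1$ matched control samples into that level, so that the original control pair is again compelled to contribute exactly one matched control sample. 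The delicate points are to make this auxiliary gadget (i) feasible, so that every covariate level ends up with enough control samples; (ii) rigid, so that the $\kappa-1$ auxiliary matched control samples are uniquely determined in every zero-cost solution; and (iii) disjoint, in both covariates, from the variable and clause gadgets, so that it cannot be hijacked to satisfy a clause spuriously.

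Granting such a reinforcement, the argument finishes as in \cref{thm:np2BM}: in any zero-cost solution of the enlarged instance the auxiliary gadgets force the promised quota of matched control samples into each covariate level, so deleting all auxiliary samples leaves a matching of the original samples in which each original treatment sample is matched to exactly one original control sample and each covariate level carries exactly its $\kappa=1$ quota of matched original control samples --- that is, a feasible zero-cost solution of the $2$-covariate BM instance of \cref{thm:np2BM} --- whence the $3$-SAT formula is satisfiable. Since the enlarged construction has size polynomial in the input for every fixed $\kappa$ and its optimum is $0$ exactly when the formula is satisfiable, the $2$-covariate $\kappa$-BM problem is NP-hard for every constant $\kappa\geq 2$, which together with \cref{thm:np2BM} proves \cref{cor:np2kBM}.
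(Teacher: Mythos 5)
You correctly set up the private-copy enlargement and, importantly, you spot the real obstacle: with all distances still in $\{0,1\}$ and a target optimum of $0$, each treatment sample now has $\kappa$ slots and can spend two of them on its two original zero-distance partners (e.g., $e_i^{(j)}$ on both $a_i^{(j)}$ and $b_i^{(j)}$), which destroys the ``exactly one control of each pair is matched'' invariant on which the variable gadgets rest. But the proof stops there: the auxiliary gadget that is supposed to restore rigidity is never constructed, only specified by a wish-list of properties (feasible, rigid, disjoint), and those properties are in tension with one another. If the auxiliary gadget is disjoint from the original samples and supplies no zero-distance edges to the original treatment samples, then each original treatment sample still has only two zero-distance partners and cannot fill $\kappa\geq 3$ slots at cost zero, so the enlarged instance has no zero-cost solution even when the formula is satisfiable; if instead the auxiliary controls are at distance zero to the original treatments, you are back in the private-copy situation and must separately rule out a treatment trading an auxiliary slot for a second original partner. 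As written, the reduction is therefore incomplete.

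The paper closes exactly this gap with a counting device rather than a new gadget: it keeps the $\kappa-1$ private zero-distance copies per treatment sample, but first shifts the original distances, replacing every $0$ by $1$ and every $1$ by a large $M>t$ (where $t$ is the number of treatment samples), and asks whether the optimum equals $t$ rather than $0$. The whole instance then contains exactly $(\kappa-1)t$ zero-distance pairs, and every treatment sample must be matched $\kappa$ times, so any solution of cost $t$ is forced to use all private copies and exactly one former-zero-distance original edge per treatment sample; the one-partner-per-treatment invariant is recovered by arithmetic alone, and the rest of the argument of \cref{thm:np2BM} applies verbatim. You would need either this trick or an explicit, verified auxiliary gadget to make your argument go through.
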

\begin{proof}
	We prove this corollary also by reduction from the 3-SAT problem with clauses $\cC_1, \cC_2, ..., \cC_m$ and variables $ v_1, v_2, ..., v_n$.
	
	We first construct the same covariates levels, treatment and control samples which are the same as the  $2$-covariate BM instance in the proof of \cref{thm:np2BM}. Let $t$ denote the number of treatment samples constructed. The distance between each treatment and each control is modified: we change all distances $0$ to $1$ and all distances $1$ to $M$ for $M$ being a large constant which is greater than $t$. According to the proof of \cref{thm:np2BM},  the 3-SAT problem is satisfiable if and only if the optimal objective value of the $2$-covariate BM problem with this modified distance  is $t$.
	
	Next, we add more samples to the control group. For each treatment sample constructed, we add $\kappa-1$ number of control samples that take the same level values as the treatment sample for both covariates. The $\kappa-1$ distances between the treatment sample and the new $\kappa-1$ control copies of it are set to $0$. All remaining undefined distances are set to $M$. We claim that the 3-SAT problem is satisfiable if and only if the optimal objective value of the $2$-covariate $\kappa$-BM problem on this new instance is $t$.
	
	If the 3-SAT problem is satisfiable, we can find an assignment for the BM problem on the constructed instance with distance $t$ as in the proof of \cref{thm:np2BM}. By assigning additionally each treatment sample  to the corresponding $\kappa-1$ new control copies, we have a solution for the $\kappa$-BM problem on the constructed instance. Furthermore, this is also the optimal solution for the $\kappa$-BM problem as there are only $(\kappa-1)\cdot t$ zero-distance pairs.
	
	On the other hand, if the optimal solution to the $2$-covariate $\kappa$-BM problem has a total distance of $t$, then all the $(\kappa-1)\cdot t$ zero-distance pairs must be matched. In addition, there must be $t$ matched pair with distance $1$. From the arguments in the proof of \cref{thm:np2BM}, we can derive that there is a truth assignment for the 3-SAT problem.
\end{proof}

We can further derive that the $\kappa$-BM problem is also NP-hard for more than two covariates.

\begin{corollary}\label{cor:np2BM}
The $P$-covariate $\kappa$-BM problem is NP-hard for every value of $P$ when $P\geq2$ for any constant $\kappa$.
\end{corollary}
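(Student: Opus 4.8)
The plan is to establish this by a simple padding reduction from the $2$-covariate $\kappa$-BM problem, which is NP-hard by \cref{cor:np2kBM} (with the case $\kappa=1$ covered by \cref{thm:np2BM}); the argument parallels the one used in \cref{cor:np3cov}.

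First I would take an arbitrary instance of the $2$-covariate $\kappa$-BM problem --- a treatment group $\gT$ of size $n$, a control group of size $n'$, the distance matrix $(\delta_{ij})$, and the level partitions of covariates $1$ and $2$ --- and build a $P$-covariate instance on exactly the same samples with exactly the same distances, keeping covariates $1$ and $2$ unchanged and letting each new covariate $p=3,\ldots,P$ consist of a single level containing all of $\gT$ and all of the control group. This construction is clearly computable in polynomial time, since it introduces only $P-2=O(1)$ additional covariates and no new samples.

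Next I would verify that a matching is feasible for the $P$-covariate instance precisely when it is feasible for the original $2$-covariate instance. In any feasible matching each treatment sample is assigned to $\kappa$ control samples, so exactly $\kappa n$ control samples are matched in total; for every $p\ge 3$ the unique level of covariate $p$ has $\ell_{p,1}=n$, so the $\kappa$-fine-balance constraint for that level asserts only that $\kappa n$ control samples are matched, which holds automatically. Hence the constraints associated with covariates $3,\ldots,P$ are vacuous, while the constraints for covariates $1$ and $2$ together with the degree constraints on the treatment and control samples coincide with those of the $2$-covariate instance. Since the objective $\sum_{i,j}\delta_{ij}x_{ij}$ is unchanged, the optimal values of the two instances are equal.

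Finally I would conclude: a polynomial-time algorithm for the $P$-covariate $\kappa$-BM problem would yield one for the $2$-covariate $\kappa$-BM problem, contradicting \cref{cor:np2kBM} unless $P=NP$. There is no genuine obstacle in this argument; the only point requiring a moment's care is checking that the single-level covariates impose no constraint, which is immediate once one notes that the total number of matched control samples is fixed at $\kappa n$ by the treatment-side degree constraints.
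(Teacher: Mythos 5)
Your proposal is correct and is essentially identical to the paper's own proof: both pad the NP-hard $2$-covariate $\kappa$-BM instance with $P-2$ single-level covariates and observe that the added fine-balance constraints are vacuous. Your write-up is merely more explicit than the paper's in verifying that the single-level constraints hold automatically because the treatment-side degree constraints fix the number of matched control samples at $\kappa n$.
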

\begin{proof}
	
	For any $2$-covariate $\kappa$-BM problem, and any $P\geq3$, we can construct an equivalent $P$-covariate $\kappa$-BM problem  by adding $P-2$ covariates for each sample in the $2$-covariate $\kappa$-BM problem instance and set the value of the $p$th covariate to be the same for all samples, for each $p=3,...,P$.
	
	Therefore, the NP-hardness of $2$-covariate $\kappa$-BM problem implies that the $P$-covariate $\kappa$-BM problem is NP-hard as long as $P\geq2$.
\end{proof}

\subsection{The special case of $2$-covariate BM and $2$-covariate $\kappa$-BM problems where one covariate has a constant number of levels}

Let BM' be the special case of $2$-covariate BM problem where the second covariate has a constant number of levels while the first covariate has no restriction on the number of levels. In Section \ref{sec:FPT} we will establish that if both covariates have constant number of levels then the $2$-covariate BM problem is polynomial time solvable.  We show here that the complexity status of the $2$-covariate problem in which only  {\em one} covariate has a constant number of levels is linked to the complexity status of the {\em exact matching problem} and its weighted version denoted as {\em weighted exact matching}.  In order to present this connection we assume that the distance matrix is integral and all distances are given in unary, that is, there is a polynomial $\pi$ of the input encoding length where $\delta_{ij} \leq \pi$ for all $i,j$.

The exact matching in bipartite graph problem is defined as follows.  The input is an integer number $k$ together with a bipartite graph $G=(V_1\cup V_2,E)$ with $|V_1|=|V_2|=q$, and its edge set $E$ is partitioned into $E_b \cup E_r$ where $E_b$ is the set of blue edges and $E_r$ is the set of red edges. The exact matching problem is to find a perfect matching that has exactly $k$ blue edges (and all other $q-k$ edges are red).  The complexity status of the exact matching problem is as follows.  While \cite{MVV} showed that there is a randomized polynomial time algorithm for the problem, the existence of a deterministic polynomial time algorithm is still an important open problem.  

The weighted exact matching problem is defined as follows.  The input is a bipartite graph $G=(V,E)$ together with non-negative integral distances $\delta_e$ for all $e\in E$, where there is a polynomial $\pi$ of $|V|+|E|$ such that $\delta_e \leq \pi$ for all $e\in E$.  We are also given a target value $K$.  The goal is to find a perfect matching of total distance exactly $K$. Note that the weighted exact matching problem is a generalization of the exact matching problem since the later problem can be interpreted as the weighted exact matching problem where the weight of a blue edge is $1$ and the weight of a red edge is $0$.  Thus, a polynomial time algorithm for the weighted exact matching problem gives a polynomial time algorithm for the exact matching problem. On the other hand, it is known that a polynomial time algorithm for the exact matching problem gives a polynomial time algorithm for the weighted exact matching problem (see proposition 1 in \cite{PY}).  If the algorithm for the exact matching is deterministic (randomized), then the algorithm for the weighted exact matching is deterministic (randomized, respectively) as well \cite{PY}.  Therefore, the complexity of the weighted problem has the same status as the one of the exact matching problem.  Namely, the result of \cite{MVV} gives a randomized polynomial time algorithm for  the weighted exact matching problem, while the existence of a deterministic polynomial time algorithm for this problem will result in a deterministic polynomial time algorithm for the exact matching in bipartite graphs problem.

We show the following connections between the exact matching problem (or the weighted exact matching) and problem BM'.

\begin{theorem}
If there is a deterministic (or randomized) polynomial time algorithm for BM' then there is a deterministic (or randomized, respectively) polynomial time algorithm for exact matching in bipartite graphs.
\end{theorem}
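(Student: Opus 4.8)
The plan is to exhibit a polynomial-time many-one reduction from the exact matching problem on bipartite graphs to BM', so that a deterministic (respectively randomized) polynomial-time algorithm for BM' yields one for exact matching. Given an exact matching instance --- a bipartite graph $G=(V_1\cup V_2,E)$ with $|V_1|=|V_2|=q$, a partition $E=E_b\cup E_r$ into blue and red edges, and a target integer $k$ --- I will build a BM' instance whose distance matrix is $\{0,1\}$-valued and whose second covariate has only two levels, such that the BM' optimum equals $0$ if and only if $G$ has a perfect matching using exactly $k$ blue edges (and the instance is infeasible or has positive optimum otherwise). Running the assumed BM' algorithm and testing whether the returned optimum is $0$ then decides exact matching, preserving determinism/randomization.

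The treatment samples will be: one sample $T_u$ per $u\in V_1$; for each blue edge $e$ a ``slack'' sample $P_e$; a private sample $Z_v$ per $v\in V_2$; and $k$ private samples $A_1,\dots,A_k$. The control samples will be: one sample $C_v$ per $v\in V_2$; for each blue edge $e$ two samples $Q_e,R_e$; a private sample $W_v$ per $v\in V_2$; and $k$ private samples $B_1,\dots,B_k$. Distances are $0$ on the ``intended'' pairs and $1$ otherwise, where the intended pairs are $\{T_u,C_v\}$ for every red edge $(u,v)$; $\{T_u,Q_e\}$ and $\{P_e,C_v\}$ for every blue edge $e=(u,v)$; $\{P_e,R_e\}$ for every blue edge $e$; $\{A_j,B_j\}$; and $\{Z_v,W_v\}$. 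The key gadget is the covariate-$1$ level $\{P_e\,;\,Q_e,R_e\}$ attached to each blue edge $e$: containing one treatment and two control samples, fine balance forces exactly one of $Q_e,R_e$ to be matched, which couples ``$T_u$ matched to $Q_e$'' with ``$P_e$ matched to $C_v$,'' making the gadget a clean ``edge $e$ used / not used'' switch. Dually, the covariate-$1$ level $\{Z_v\,;\,C_v\}$ forces $C_v$ to be matched --- even though it is a control sample --- so each $v\in V_2$ is covered by exactly one selected edge; and since each $T_u$ is matched exactly once, each $u\in V_1$ is covered by exactly one selected edge. Hence the selected edges form a perfect matching of $G$. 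All remaining samples go into one extra ``dump'' level of covariate $1$, which balances automatically.

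To count blue edges I use the second covariate with the two levels ``BLUE'' and ``DUMP.'' Level BLUE contains, on the control side, exactly the samples $Q_e$ for blue $e$, and, on the treatment side, exactly $A_1,\dots,A_k$; all other samples lie in DUMP. Since $Q_e$ is matched precisely when blue edge $e$ is selected, the fine-balance equation for level BLUE reads ``(number of selected blue edges)$\,=k$,'' while the equation for DUMP then holds automatically by a total-count argument. The construction has $O(q+|E|)$ samples and $O(q+|E_b|)$ covariate-$1$ levels, is computable in polynomial time, and has exactly the shape of a BM' instance: the first covariate may have many levels, the second only two; and since all distances lie in $\{0,1\}$, the unary-encoding hypothesis is trivially met.

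It remains to verify both directions. Forward: a perfect matching $F$ of $G$ with exactly $k$ blue edges yields the explicit $0$-distance solution matching $T_u$--$C_v$ for red $e\in F$, $T_u$--$Q_e$ and $P_e$--$C_v$ for blue $e\in F$, $P_e$--$R_e$ for blue $e\notin F$, together with all private pairs $A_j$--$B_j$ and $Z_v$--$W_v$; a short check confirms every fine-balance constraint. Converse: in any feasible $0$-distance solution every matched pair is intended, the private pairs and the level $\{Z_v\,;\,C_v\}$ force each $C_v$ to be matched, the levels $\{P_e\,;\,Q_e,R_e\}$ force the blue-edge gadgets into the switch behavior, each $T_u$ is matched once, and level BLUE forces exactly $k$ selected blue edges --- so the selected edges form a perfect matching of $G$ with exactly $k$ blue edges. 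I expect the gadget design of the second paragraph to be the main obstacle: one must guarantee that a blue-edge gadget cannot be ``half used'' (covering $u$ but not $v$, or vice versa) and that a control sample can be \emph{forced} to be matched through a covariate level; once the levels are set up as above, the remainder is careful but routine counting.
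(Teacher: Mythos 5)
Your reduction is correct, but it takes a genuinely different and considerably more elaborate route than the paper's. The paper encodes exact matching into BM$'$ by a direct node-splitting trick: each node $v\in V_1$ becomes \emph{two} control samples $v_r,v_b$ (red edges rerouted to $v_r$, blue to $v_b$, original edges at distance $0$, all else at distance $1$), the treatment group is simply $V_2$, each pair $\{v_r,v_b\}$ is one covariate-$1$ level containing one treatment node (so fine balance selects exactly one copy per $v$), and the two-level second covariate is just ``all blue copies'' versus ``all red copies'' with $k$ treatments placed in the blue level, which forces exactly $k$ blue controls to be matched. This yields $q$ treatments, $2q$ controls and $q$ covariate-$1$ levels, and the correctness check is essentially immediate. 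You instead build a per-blue-edge switch gadget $\{P_e;Q_e,R_e\}$, force each $C_v$ to be matched via a singleton level $\{Z_v;C_v\}$, and count blue edges indirectly through the $Q_e$'s balanced against $k$ auxiliary treatments $A_1,\dots,A_k$ -- a construction in the spirit of the paper's 3-SAT reduction for the NP-hardness of the $2$-covariate BM problem. I checked the switch logic and the level counts (including the dump levels and the automatic balancing of the second-covariate DUMP level), and they go through: the gadget cannot be half-used because exactly one of $Q_e,R_e$ must be selected, which rigidly couples $T_u$--$Q_e$ with $P_e$--$C_v$. So both proofs are valid; the paper's buys a much smaller instance and a shorter verification, while yours demonstrates that the statement also follows from a generic edge-gadget methodology at the cost of $O(q+|E|)$ samples and a longer case analysis.
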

\begin{proof}
Assume that there is a polynomial time algorithm ALG for BM' and we will establish the existence of a polynomial time algorithm for the exact matching problem. Given an input to the exact matching problem with $2q$ nodes ($q$ nodes on each side of the bipartite graph), we denote the bipartition of the graph by $V_1\cup V_2$ and the partition of the edge set into $E_b \cup E_r$, and we let $k$ be the number of the required blue edges in the matching.
We define the following input for BM'.  We will associate samples with nodes so the control group consists of nodes and the treatment group also consists of nodes. For every node $v\in V_1$, we have two nodes in the control corresponding to $v$: a red node $v_r$ and a blue node $v_b$. All blue edges in $E_b$ that were incident to $v$ in the input to the exact matching problem are now incident to $v_b$, and all red edges that were incident to $v$ are now incident to $v_r$. These edges corresponding to original edges in the input for the exact matching instance have zero distance, while all other distances are set to $1$. The nodes in $V_2$ (of the original input graph to the exact matching) are the treatment nodes, so the distances we defined represent the distances between a treatment node and a control node. 

The levels of the first covariate are defined such that every pair $[v_r,v_b]$ for $v\in V_1$ defines one level of the first covariate, and we have one treatment node in each such level.  Observe that the number of levels of the first covariate is $q$, and we have $q$ nodes in the treatment group, so this assignment of levels of the first covariate is feasible.
Next, consider the second covariate. We will have two levels of the second covariate corresponding to blue and red. The red level of the control is the set of all red nodes, and the blue level of the control is the set of all blue nodes. The second level of the treatment are defined so that there will be exactly $k$ nodes of the treatment in the blue level and the remaining $q-k$ nodes in the treatment are in the red level.  

We would apply algorithm ALG on the BM' instance and check if the output cost is zero or strictly positive. In any feasible solution of the BM' defined, exactly one of two control nodes $[v_r,v_b]$ is matched for each $v\in V_1$, as there is exactly one treatment node in each level of the first covariate. And since we need to match $k$ blue control nodes and $q-k$ red control nodes, a zero distance matching of the BM' represents a set of edges in the exact matching instance that is a perfect matching consisting of $k$ blue edges and $q-k$ red edges.
 
Observe that this construction is of deterministic polynomial time. Thus the algorithm that constructs the input to BM' and apply ALG on that input is a deterministic (randomized) polynomial time algorithm for the exact matching problem if ALG is a deterministic (randomized, respectively) polynomial time algorithm for BM'.  
\end{proof}

We next consider the other direction.
\begin{theorem}
If there is a deterministic (or randomized) polynomial time algorithm for weighted exact matching in bipartite graphs then there is a deterministic (or randomized, respectively) polynomial time algorithm for BM'.
\end{theorem}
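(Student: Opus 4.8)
The plan is to reduce BM' to polynomially many instances of the weighted exact matching problem on bipartite graphs. Recall that a feasible solution of BM' consists of a selection $S'$ of exactly $n$ control samples with $|S'\cap \lC_{1,q}|=\ell_{1,q}$ for every covariate-$1$ level $q$ and $|S'\cap \lC_{2,j}|=\ell_{2,j}$ for every covariate-$2$ level $j$, together with a minimum-distance perfect matching between $\gT$ and $S'$, where matched pairs need not agree on any covariate. I would first dispose of the trivial infeasibility checks ($n'\ge n$, $\ell'_{1,q}\ge \ell_{1,q}$ for all $q$, $\ell'_{2,j}\ge \ell_{2,j}$ for all $j$), and then build an auxiliary balanced bipartite graph $H$ whose perfect matchings are precisely the BM' solutions augmented by an explicit accounting of the unselected controls. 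On one side of $H$ place the $n$ treatment nodes together with, for each covariate-$1$ level $q$, a set of $\ell'_{1,q}-\ell_{1,q}$ \emph{dummy} nodes of type $q$; the total number of dummies is $\sum_q(\ell'_{1,q}-\ell_{1,q})=n'-n$, so both sides have $n'$ nodes. A type-$q$ dummy is joined only to the controls of $\lC_{1,q}$, every treatment node is joined to all controls, and a treatment-to-control edge carries weight $\delta_{ij}$ while a dummy edge carries no ``real'' cost.

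The reason for the typed dummies is that they enforce the covariate-$1$ balance for free: in any perfect matching of $H$ each of the $\ell'_{1,q}-\ell_{1,q}$ type-$q$ dummies must be matched inside $\lC_{1,q}$, so exactly $\ell_{1,q}$ controls of $\lC_{1,q}$ are matched to treatment nodes, and the controls matched to treatment form a valid covariate-$1$ selection on which the treatment edges induce an assignment of cost $\sum\delta_{ij}$. What remains are the $k_2=O(1)$ covariate-$2$ conditions ``exactly $m_j:=\ell'_{2,j}-\ell_{2,j}$ dummy edges enter $\lC_{2,j}$''. I would enforce these by a mixed-radix weighting on the dummy edges: fix $B=n'+1$ and add to every dummy edge entering a control of $\lC_{2,j}$ the extra weight $(n\pi+1)B^{\,j-1}$, where $\pi$ is the polynomial bounding all $\delta_{ij}$. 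In a perfect matching of $H$ the number $a_j$ of dummy edges entering $\lC_{2,j}$ satisfies $0\le a_j\le n'<B$ and $\sum_j a_j=n'-n$, so the total cost equals $(n\pi+1)\sum_j a_j B^{\,j-1}+C$ where $C=\sum\delta_{ij}\le n\pi$; dividing by $n\pi+1$ recovers the base-$B$ digits $(a_1,\dots,a_{k_2})$, and the remainder recovers $C$. Because $k_2$ is constant, every edge weight is bounded by $(n\pi+1)(n'+1)^{\,k_2-1}$, which is a polynomial of the input length, so this is a legitimate weighted-exact-matching instance (and the weights fit the required unary-type bound).

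To finish, set $M^{\ast}=(n\pi+1)\sum_j m_j B^{\,j-1}$. A perfect matching of $H$ has cost in $[M^{\ast},\,M^{\ast}+n\pi]$ if and only if $a_j=m_j$ for all $j$, i.e.\ if and only if it encodes a feasible BM' solution, in which case its cost is $M^{\ast}+C$ with $C$ the assignment cost. Hence I would call the assumed algorithm for weighted exact matching on $H$ with target $K=M^{\ast},M^{\ast}+1,\dots,M^{\ast}+n\pi$ in turn (only $n\pi+1$ values, polynomial since distances are given in unary): if all calls fail, report infeasibility; otherwise take the smallest $K$ that returns a perfect matching, restrict that matching to its treatment edges to obtain an optimal BM' assignment, and let $S'$ be the controls it covers. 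All of this — constructing $H$, the weighting, the $O(n\pi)$ oracle calls, the decoding — is deterministic and polynomial, so a deterministic (respectively randomized) polynomial-time algorithm for weighted exact matching gives one for BM'. The main obstacle, and the place where the hypothesis that the second covariate has $O(1)$ levels is used crucially, is keeping the radix-encoding weights polynomially bounded: inserting a digit for each covariate-$1$ level would make the weights exponential, which is exactly why covariate $1$ must be handled structurally through the typed dummies and only covariate $2$ through the radix trick.
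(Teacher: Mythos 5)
Your proof is correct, but it takes a genuinely different route from the paper's. The paper also builds a balanced bipartite graph with $n'-n$ additional (dummy) nodes carrying covariate-1 levels, but it then sets up a multi-objective perfect matching problem with $1+k_2$ linear objectives (total distance, plus one counting objective per covariate-2 level), invokes the Papadimitriou--Yannakakis framework to compute a $(1+\eps)$-approximate Pareto set via the weighted exact matching oracle, chooses $\eps=1/(2n^2\pi+1)$ so that the approximate Pareto set is exact, and reads off the unique ``candidate feasible point''; covariate-1 balance is enforced softly, by a penalty of $2n\pi$ on dummy--control edges with mismatched levels in a complete bipartite graph. You instead enforce covariate-1 balance structurally (typed dummies adjacent only to their own level, which is equivalent and slightly cleaner), and you collapse the $k_2$ covariate-2 counting constraints into the single scalar weight via a radix encoding with base $B=n'+1$ scaled by $n\pi+1$, then make $n\pi+1$ direct oracle calls on consecutive targets $M^*,\dots,M^*+n\pi$ and take the smallest feasible one. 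Your decoding argument (recovering $C$ modulo $n\pi+1$ and the digits $a_j<B$ from the quotient) is sound, the target-window characterization of feasibility is correct, and both approaches use $k_2=O(1)$ in an essential way -- the paper to bound the number of objectives for \cite{PY00}, you to keep the radix weights polynomially bounded (your bound $(n\pi+1)(n'+1)^{k_2-1}$ is polynomial in the input encoding length under the same convention the paper itself uses for its $2n\pi$ coefficients). What your approach buys is self-containedness: it avoids the approximate-Pareto-set machinery entirely and directly returns the optimal value by scanning targets, at the price of larger (but still polynomial) weights and $O(n\pi)$ oracle calls instead of one Pareto-set computation.
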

\begin{proof}
Assume that there is a polynomial time algorithm for the weighted exact matching problem in bipartite graph, we will establish the existence of a polynomial time algorithm for BM'. Here we are going to use the fact that the maximum distance is at most $\pi$ and without loss of generality, we assume that $n' \leq \pi$. We set $\eps=1/(2n^2 \pi+1)$, and we define a multi-objective optimization problem. As a step in the algorithm for solving BM', we will find a $(1+\eps)$-approximated Pareto set of perfect matchings in the following graph with the following multi-criteria objective.

We consider a complete bipartite graph where one side consists of the control group, namely one node for each control sample, the other side of the graph correspond to the treatment group together with some additional nodes.  If we have in the instance of BM', a control group of size $n'$ and a treatment group of size $n$, then we will have exactly $n'-n$ additional nodes.  The graph that we consider is the complete bipartite graph with $n'$ nodes on each side.  In this graph a feasible solution is a perfect matching of all nodes. 

Next, we define the $1+k_2$ objectives, for constant $k_2$ being the number of levels of the second covariate. The first objective corresponds to total distance, and one of the remaining objectives for each level of the second covariate. These objectives are sums of cost for edges in the perfect matching with different coefficients. We denote these cost coefficients as a vector for each edge so the first coefficient of this cost is the coefficient of the cost function of the first objective.
In order to define the first cost coefficient, we assign a covariate 1 level for the additional nodes as follows.  For each level $p$ of the first covariate, we assign exactly $\ell'_{1,p}-\ell_{1,p}$ additional nodes to have level $p$.  Note that without loss of generality we have $\ell'_{1,p}\geq \ell_{1,p}$ for all $p$ (as otherwise BM' is infeasible), and thus our definition of levels of the first covariate for the additional nodes, indeed define such level for every additional node. 
Consider an edge $[u,v]$ between an additional node $u$ and a control sample node $v$. The first cost coefficient of edge $[u,v]$ is $0$ if the two nodes share the same level of covariate 1, otherwise it is set to $2n\pi$.  The other $k_2$ cost coefficients of such an edge $[u,v]$ are set to $0$.
Consider next an edge $[u',v]$ where $u'$ is a treatment node (and not an additional node) and assume that the second covariate level of $v$ is $p$.  The first cost coefficient is the distance between samples $u'$ and $v$ in the BM' instance (that is, $\delta_{u',v}$). The $(p+1)$th cost coefficient is set to $1$ and the other $k_2-1$ coefficients are set to $0$. In this way, the $(p+1)$th objective is to minimize the number of edges adjacent to  level-$p$ control nodes for covariate 2. 

Observe that every feasible solution, namely, every perfect matching in this bipartite graph has the property that the sum of the costs according to all objectives excluding the first one is exactly $n$.  Furthermore, observe that in a Pareto set of this multi-criteria optimization problem, there is exactly one point where for every $p=1,2,\ldots ,k_2$, the total cost of the matched edges according to the $(p+1)$-th objective is exactly the number of treatment samples of the $p$-th level of the second covariate.  We will refer to this point as the {\em candidate feasible point of the Pareto set}.

Now, we use the result of Papadimitriou and Yannakakis \cite{PY00} to conclude that the algorithm for weighted exact matching gives the required algorithm for approximating the Pareto set and finding the $(1+\eps)$-approximated Pareto set (see theorem 4 and corollary 5 in \cite{PY00}).  We can use the results of \cite{PY00} since the maximum cost coefficient of an edge in our instance is $2n\pi$, that is polynomially bounded in the input encoding length, and the number of objectives is a constant.

We next argue that the $(1+\eps)$-approximated Pareto set is actually the Pareto set of the multi-criteria problem.  The cost vectors of the edges are always integral and have a maximum coefficient of at most $2n\pi$, and thus for every objective the cost of the matching is at most $2n^2\pi$.  Therefore, if we approximate this objective with an approximation ratio of $1+\eps$ then by our choice of $\eps$ we get the optimal value of this objective.  
By the last claim we conclude that the candidate feasible point of the Pareto set is one of the solutions that appear in this Pareto set and we consider this specific solution.

We delete from the candidate solution the edges (in the matching) that are adjacent to the additional nodes. By the notion of the candidate solution, the fine balance constraints of the second covariate are satisfied. So if the resulting matching is not a feasible solution to the BM' instance, it means that there is at least one additional node that used to be matched (in the candidate solution) to a control sample of a different covariate 1 level.  This is so as otherwise, for every level of the first covariate, the number of selected control samples of this level is the same as the number of treatment samples of this level.  Consequently, if the resulting matching is infeasible for BM', then the first objective value of the candidate solution is at least $2n\pi$, which implies that the BM' problem is infeasible as we show next. If BM' has a feasible solution, then we can create an alternative candidate solution by adding to this solution for BM' a zero distance matching of the additional nodes. This alternative solution has a first objective value that is less than or equal to $n\pi$, and all other objectives values are the same as the candidate solution. So according to the $(1+\eps)$ Pareto optimality, the resulting matching must be feasible for the BM' problem.  

In summary, we consider the candidate solution for BM' obtained from the candidate feasible point of the Pareto set after deleting the edges adjacent to the additional nodes. We check if the candidate solution for BM' satisfies the fine balance constraints.  If it does, then this is the output of the algorithm for BM', and if it does not, then the BM' instance is infeasible.

Furthermore, if we use the existing algorithm of \cite{MVV} to solve the weighted exact matching then the resulting algorithm will be randomized polynomial time algorithm whereas if we will use it with a deterministic polynomial time algorithm then the resulting algorithm for BM' is also a deterministic polynomial time algorithm.
\end{proof}

Next we consider the $\kappa$-BM' that is the special case of $2$-covariate $\kappa$-BM where the second covariate has a constant number of levels, and once again we assume that the distance matrix is integral and the maximum distance is upper bounded by a polynomial $\pi$ of the input encoding length.  We show that $\kappa$-BM' has the same complexity status as BM' (for all $\kappa\geq 2$).  That is, we establish the following result.

\begin{theorem}
There is a polynomial time algorithm for BM' if and only if there is a polynomial time algorithm for $\kappa$-BM'.
\end{theorem}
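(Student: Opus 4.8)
The plan is to prove both implications by polynomial-time reductions, one reducing $\kappa$-BM' to BM' and one reducing BM' to $\kappa$-BM', each making a single oracle call and replacing the samples of the given instance by a modified family of samples, while (a) keeping the number of levels of the second covariate constant, so the constructed instance is again of the required type, and (b) keeping all distances polynomially bounded, so that the unary-encoding hypothesis survives and a deterministic (resp.\ randomized) algorithm for one problem yields a deterministic (resp.\ randomized) algorithm for the other.

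For the direction ``a polynomial algorithm for BM' gives one for $\kappa$-BM''', I would reduce $\kappa$-BM' to BM' by \emph{splitting treatments}. Given a $\kappa$-BM' instance, build a BM' instance by replacing each treatment sample $i$ with $\kappa$ identical copies carrying the same pair of covariate levels as $i$, with the distance from each copy of $i$ to a control sample $j$ set to $\delta_{ij}$; the control group and its level partition are unchanged. The new treatment group has $\kappa\ell_{1,q}$ samples in covariate-1 level $q$ and $\kappa\ell_{2,p}$ samples in covariate-2 level $p$, so the $(\gT,S')$-fine-balance requirement of this BM' instance is exactly constraints (\ref{cons:ab1}) and (\ref{cons:ab2}) of the $\kappa$-BM' instance; moreover an assignment of all $\kappa n$ copies to distinct control samples is, up to permuting the copies of each $i$, the same object as an assignment of each $i$ to $\kappa$ distinct control samples, of the same total distance. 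Hence calling the BM' algorithm on this instance solves the $\kappa$-BM' instance.

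For the converse I would reduce BM' to $\kappa$-BM' using the ``add dummies'' construction of \cref{cor:np2kBM} reinforced by a cost scaling. Given a BM' instance with integral distances $\delta_{ij}\le\pi$, keep all original samples; add, for each treatment $i$, exactly $\kappa-1$ private dummy control samples carrying $i$'s pair of covariate levels; set the distance from $i$ to each of its own dummies to $0$ and from $i$ to any other treatment's dummy to a value $M$; and replace each original distance $\delta_{ij}$ with $N+\delta_{ij}$, where $N>\kappa n\pi$ and $M$ is larger than the cost of any assignment that uses no $M$-edge (both $N$ and $M$ are polynomially bounded). The crux is that any $\kappa$-BM' solution of cost below $(n+1)N$ uses no $M$-edge, so each treatment is matched only to its own dummies and to original controls, hence to at least one --- and therefore exactly one --- original control; consequently all $(\kappa-1)n$ dummies are matched to their own treatments, and the covariate-by-covariate count (in covariate-1 level $q$ all $(\kappa-1)\ell_{1,q}$ dummies being matched forces exactly $\ell_{1,q}$ matched original controls, and likewise in covariate 2) shows the original edges form a feasible BM' solution of cost equal to the $\kappa$-BM' cost minus $nN$. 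Conversely, extending any feasible BM' solution by all dummy edges is a feasible $\kappa$-BM' solution of cost $nN$ plus the BM' cost. Therefore, when BM' is feasible the $\kappa$-BM' algorithm returns exactly $nN+W$, where $W$ is the optimal BM' cost, together with an optimal BM' assignment read off the original edges, and when BM' is infeasible it returns ``infeasible'' or a value of at least $(n+1)N$; in either case the answer for BM' is recovered.

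The step I expect to be the main obstacle is this second reduction: unlike in \cref{cor:np2kBM}, where we only had to certify feasibility of a zero-cost matching, here we must extract an \emph{optimal} BM' assignment from a single oracle call, which forces an optimal $\kappa$-BM' solution to saturate every dummy with its own treatment. Getting this right is a matter of calibrating two magnitudes --- $N>\kappa n\pi$ makes the number of used original edges the lexicographically dominant term, so it is driven down to $n$ before the $\delta$-part is minimized, and $M$ being dominant keeps the private-dummy structure rigid and makes the feasible/infeasible threshold $(n+1)N$ clean --- after which the covariate counting argument finishes the proof; everything introduced is of polynomial size, so both reductions run in polynomial time and preserve determinism and randomization.
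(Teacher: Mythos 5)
Your proposal is correct and follows essentially the same route as the paper: the forward direction splits each treatment into $\kappa$ identical copies, and the reverse direction adds $\kappa-1$ private zero-distance dummy controls per treatment, shifts the original distances by a large additive constant, and makes cross-dummy matches prohibitively expensive so that an optimal $\kappa$-BM' solution must saturate every dummy and use exactly one original control per treatment. The only differences are in the specific (equally valid) choices of the calibration constants $N$ and $M$.
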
 
\begin{proof}
Assume that there is a polynomial time algorithm ALG for BM'.  Consider an instance of the $\kappa$-BM' problem, and replace every treatment sample by $\kappa$ copies of it with the same pair of levels as the original element of the treatment group.  We define the distance matrix as follows.  The distance between a treatment sample $x$ that is a copy of the original treatment sample $x'$ (of the instance for the $\kappa$-BM') and a control sample $y$, is now defined as the distance between $x'$ and $y$.  The resulting treatment group and control group is the instance for BM', and we apply ALG on that instance.  A feasible solution for this BM' instance gives a feasible solution for the $\kappa$-BM' instance (simply by matching a treatment sample to a control sample if one of the copies of the treatment sample was matched to that control sample) and of the same total distance.  Similarly, a feasible solution to the original $\kappa$-BM' instance gives a feasible solution for the BM' instance of the same cost by matching the set of $\kappa$ matched control sample to the unique treatment sample to the $\kappa$ copies of this treatment sample in the BM' instance (with one control sample matched to each copy).  Thus, we get a polynomial time algorithm for $\kappa$-BM' problem.

Consider the other direction.  Assume that we are given a polynomial time algorithm ALG for $\kappa$-BM' and we establish the existence of a polynomial time algorithm for BM'.  Consider an instance of the BM' problem.  First, we add $(n+1)\pi$ for every component of the distance matrix.  This modification of the distance matrix ensures that every feasible solution for BM' has a cost that is at least $(n^2+n)\pi$ and at most $(n^2 +2n)\pi$.  Next, for every treatment sample $x$ we add another $\kappa-1$ control samples with the same levels as the ones of $x$ whose distance from $x$ is $0$ and from other treatment samples the distance is $(n^2+2n)\cdot \pi+1$. These $\kappa-1$ control samples for each treatment sample are called dummy control samples. This is the instance of $\kappa$-BM' on which we apply ALG.  The output has cost  $B\leq (n^2+2n) \cdot \pi$ if and only if the solution we obtain by deleting all the dummy control sample  is a feasible solution for BM' of cost $B$.  To prove the last claim note that the solution for the $\kappa$-BM' instance cannot match dummy control samples to treatment samples if their distance is not zero because the distance of such match is larger than $B$ and all distances are non-negative.  Furthermore, the $\kappa-1$ zero distances for each treatment sample must be in the selected control samples as otherwise, the total distance would be at least $(n+1)^2 \cdot\pi > (n^2+2n)\cdot \pi$.  Thus, by deleting the dummy control sample we get a feasible solution for the BM' problem (after the modification of the distance matrix) of the same cost.  Similarly, if we take an optimal solution for the BM' problem (before or after the modification of the distances) and add to it the dummy control samples that are matched using the zero-distances then we get an optimal solution for the $\kappa$-BM'.  Therefore, by applying ALG on the last $\kappa$-BM' problem we get a polynomial time algorithm for solving the original BM' instance.      
\end{proof}

\section{The maximum selection $\kappa$-fine-balance matching ($\kappa$-MSBM) problem}\label{sec:MSBM}

Since any $\kappa$-BM problem can be solved as a $\kappa$-MSBM problem with the same $\kappa$ and the same number of covariates, \cref{cor:np2BM} implies that the $P$-covariate $\kappa$-MSBM problem is also NP-hard for $P\geq 2$ and any constant $\kappa$. And in \cref{sec:prelim} we show that the $1$-covariate MSBM problem can be solved as an MCNF problem in polynomial time. We are going to show next that the $1$-covariate $\kappa$-MSBM problem is NP-hard even for any constant $\kappa\geq 3$ by reduction from the Exact-3-cover problem (see \cref{sec:NPkappa}) .

\begin{theorem}\label{thm:NPkMSBM}
	For any constant value of $\kappa$ such that $\kappa\geq 3$, the $1$-covariate $\kappa$-MSBM problem is NP-hard even with only one level.
\end{theorem}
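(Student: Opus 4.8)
The plan is to reduce from the Exact 3-cover problem (X3C) defined in \cref{sec:NPkappa}. The first observation is that in the $1$-covariate, single-level case the $\kappa$-fine-balance constraint collapses to a pure cardinality condition: since all treatment samples lie in the single level $L_{1,1}=\gT$ and all control samples in the single control level, $(S,S')$-$\kappa$-fine-balance holds if and only if $|S'|=\kappa|S|$. Hence the $\kappa$-FBS optimum is $m=\min\{n,\lfloor n'/\kappa\rfloor\}$, and when $n'=\kappa m$ every maximum-size selection is forced to take $S'$ equal to the \emph{entire} control group together with an arbitrary $m$-subset $S$ of the treatment group. Consequently, the second stage of $\kappa$-MSBM on such an instance amounts to: choose an $m$-subset $S$ of the treatment group and a minimum-cost assignment matching each sample of $S$ to exactly $\kappa$ control samples so that all $n'$ control samples are used.

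Given an X3C instance with a collection $C$ of triplets over a ground set $E$ with $|E|=3q$ (we may assume $|C|\ge q$, as otherwise there is trivially no exact cover), I would construct the following $1$-covariate, single-level $\kappa$-MSBM instance. The treatment group has one sample $t_T$ for each triplet $T\in C$. The control group consists of one sample $c_e$ for each element $e\in E$, together with a set $F$ of $q(\kappa-3)$ additional ``filler'' control samples; thus $n=|C|$ and $n'=3q+q(\kappa-3)=\kappa q$. The distances take values in $\{0,1\}$: $\delta(t_T,c_e)=0$ if $e\in T$ and $\delta(t_T,c_e)=1$ if $e\in E\setminus T$, while $\delta(t_T,f)=0$ for every $t_T$ and every filler $f\in F$. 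Since $n'=\kappa q$ and $n\ge q$, the $\kappa$-FBS optimum equals $q$, and, as observed above, every maximum-size selection has $S'=\{c_e:e\in E\}\cup F$ and $S$ an arbitrary set of $q$ treatment samples.

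The core of the argument is the equivalence: X3C has a solution if and only if the optimal $\kappa$-MSBM value of the constructed instance is $0$. If $C'\subseteq C$ is an exact cover, then $|C'|=q$, so $S=\{t_T:T\in C'\}$ is a valid maximum-size selection; assigning each $c_e$ to the unique $t_T$ with $T\in C'$ and $e\in T$ (cost $0$) and distributing the $q(\kappa-3)$ fillers $\kappa-3$ per selected triplet (cost $0$) matches every selected treatment sample to exactly $3+(\kappa-3)=\kappa$ control samples with total cost $0$. Conversely, suppose some maximum-size selection $(S,S')$ admits a cost-$0$ assignment. A treatment sample $t_T$ can be matched at cost $0$ to at most three of the samples $\{c_e:e\in E\}$, namely those with $e\in T$; since the $q$ selected treatment samples together absorb all $3q$ samples $c_e$, each selected $t_T$ must be matched to exactly the three samples $c_e$ with $e\in T$, so $\{T: t_T\in S\}$ is an exact cover of $E$.

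Since the construction is polynomial in the X3C input size (and $\kappa$ is a fixed constant), this proves NP-hardness of the $1$-covariate $\kappa$-MSBM problem restricted to a single level, for every constant $\kappa\ge 3$. I expect the main subtlety, to be spelled out carefully, to be the reduction of $\kappa$-fine-balance to a cardinality constraint in the one-level case, which is what pins down $S'$ as the whole control group and leaves only the choice of $S$ as the free decision of the second stage, together with the counting step in the converse direction, where the bound of three zero-distance matches per triplet must be combined with the requirement that all $3q$ element-samples be covered in order to force an exact cover.
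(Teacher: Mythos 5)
Your proposal is correct and is essentially the paper's own reduction: the same Exact 3-cover construction with one treatment sample per triplet, one control sample per element, $q(\kappa-3)$ zero-distance fillers, and the same zero-cost equivalence. The only (immaterial) difference is in the converse direction, where you count the $3q$ element samples absorbed at cost zero while the paper counts the filler samples via a pigeonhole argument; both yield that each selected triplet is matched to exactly its three elements.
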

\begin{proof}
	Given an instance of Exact-3-cover, namely a collection $C$ of 3-element subsets (triplets) of a ground set $E$ with $|E|=3q$ for some integer $q$.  We will define an instance of $1$-covariate $\kappa$-MSBM with only one level for any constant $\kappa$ such that $\kappa\geq 3$. In the constructed instance, we have one treatment sample for every triplet in $C$, and we have one control sample for every element $e\in E$.  In addition we have $(\kappa-3)\cdot q$ {\em dummy} control samples. For each triplet $T\in C$, the distance of the corresponding treatment sample to a control sample is defined as follows: it is zero if the control sample is one of the dummy samples or if the control sample is an element of the triplet $T$.  All other distances (that are still undefined) are set to one. Observe that a feasible solution for the $\kappa$-MSBM instance selects all control group and selects exactly $q$ treatment samples. We claim that in this instance, an optimal solution for $\kappa$-MSBM has total distance of zero, if and only if the Exact-3-cover instance is a YES instance.
	
	To see the last claim assume first that there is a subcollection of triplets $C'\subseteq C$ such that every element of $E$ appears in exactly one triplet in $C'$.  Then we construct a zero-distance solution for the $\kappa$-MSBM instance as follows.  We select the samples $C'$ of the treatment group and each such selected sample $T\in C'$ is matched to the samples of the control consisting of the three elements samples in $T$ together with $\kappa-3$ of the dummy control samples.  Since $C'$ has $q$ triplets, we have sufficient number of additional control samples.  Furthermore, since every element in $E$ appears only once in triplets of $C'$, we conclude that its control sample is matched to exactly one selected treatment sample.
	
	On the other hand, assume that there is a zero-distance solution for the $\kappa$-MSBM instance. Then, this solution selects exactly $q$ treatment samples corresponding to $q$ triplets.  Denote by $C'\subseteq C$ this subcollection of $q$ triplets.  Note that if there is a selected treatment sample that is matched to at least $\kappa-2$ dummy control samples, then there exists a selected treatment sample that is matched to at most $\kappa-4$ dummy control samples, and thus it is matched to at least four control samples that correspond to elements of $E$, then at least one of those matches has distance one.  This contradicts the assumption that the cost of the $\kappa$-MSBM solution is zero.  Therefore, every selected treatment sample is matched to exactly $\kappa-3$ additional control samples and three control samples that are corresponding to its elements.  Since every control sample corresponding to an element is matched exactly once, we conclude that every element of $E$ appears in exactly one triplet of $C'$, so the Exact-3-cover instance is a YES instance.
\end{proof}

For the $1$-covariate $\kappa$-MSBM problem where $\kappa=2$, the complexity status remains open.

\section{Fixed-parameter tractable algorithms}\label{sec:FPT}

In this section, we consider the special cases of the $\kappa$-FBS, $\kappa$-BM, and $\kappa$-MSBM problems where all covariates have a small number of levels.

Let $K=\prod_{i=1}^P k_i$ be the number of level-intersections. Observe that if the number of covariates is constant and all covariates have constant number of levels, then $K$ is a constant. We note that the problems $\kappa$-FBS, $\kappa$-BM, and MSBM can be solved in fixed-parameter tractable (FPT) time with parameter $K$.
In order to state these results, we say that a problem is fixed-parameterized complexity with parameter $K$ and denote it by $FPT(K)$ if it has an algorithm whose time complexity is upper bounded by a function of the form $f(K) \cdot \mbox{{\tt poly}}$ where $f(K)$ is some computable function of the parameter $K$, and $\mbox{{\tt poly}}$ is some polynomial of the input binary encoding length. We also say that an algorithm runs in $FPT(K)$ time and mean that its time complexity can be upper bounded by a function of the form $f(K) \cdot \mbox{{\tt poly}}$ where $f(K)$ is some computable function of the parameter $K$, and $\mbox{{\tt poly}}$ is some polynomial of the input binary encoding length. Here we show that these problems, namely  $\kappa$-FBS, $\kappa$-BM problems for all $\kappa$, and MSBM problem are $FPT(K)$. Similar results for $\kappa$-MSBM where $\kappa\geq 3$ cannot be obtained unless $P=NP$ as shown in \cref{thm:NPkMSBM}. The complexity status of the 2-MSBM problem with constant $K$ is open.

Our proof for the $FPT(K)$ results uses the existence of fast algorithms for solving integer programming in fixed dimension and for solving mixed-integer linear programs if the number of integral variables is fixed.   Lenstra \cite{Lenstra83} (see also \cite{Kannan83} for an improved time complexity of these algorithms) showed that  the integer linear programming problem with a fixed number of variables is polynomially solvable, and he also showed that a mixed-integer linear program with a fixed number of integer variables can be solved in polynomial time.  In fact, these algorithms runs in $FPT$ time with parameter being the number of integral variables. Therefore, to prove our results we show either an integer programming (IP) formulation with number of decision variables $O(K)$ or a mixed-integer linear program (MILP) with $O(K)$ integer variables such that solving this MILP to optimality ensures that the resulting solution is integral and solves the corresponding problem.

\subsection{The $\kappa$-FBS problem}
First consider the $\kappa$-FBS problem. For this problem we use an integer program with dimension $O(K)$ that is based on \IP. Let $u_{i_1,i_2,\ldots ,i_P}=|\lT_{1,i_1} \cap \lT_{2,i_2}\cap ... \cap  \lT_{P,i_p}|$ and $u'_{i_1,i_2,\ldots ,i_P}=|\lC_{1,i_1} \cap \lC_{2,i_2}\cap ... \cap  \lC_{P,i_p}|$ for $ i_p=1,...,k_p,\ p=1,...,P$. The decision variables are:
\\ \indent $x_{i_1,i_2,\ldots ,i_P}$: the number of  treatment samples selected from the $(i_1,i_2,\ldots ,i_P)$ level intersection $\lT_{1,i_1} \cap \lT_{2,i_2}\cap ... \cap  \lT_{P,i_p}$, for $ i_p=1,...,k_p,\ p=1,...,P$;
\\ \indent  $x'_{i_1,i_2,\ldots ,i_P}$: the number of  control samples selected from the $(i_1,i_2,\ldots ,i_P)$ level intersection $\lC_{1,i_1} \cap \lC_{2,i_2}\cap ... \cap  \lC_{P,i_p}$, for $ i_p=1,...,k_p,\ p=1,...,P$. \\
The integer programming formulation is:

\begin{subequations}\label{PFBS}
	\begin{align}
	& \max &   \sum_{i_1=1}^{k_1} \sum_{i_2=1}^{k_2}\cdots \sum_{i_P=1}^{k_P} x_{i_1,i_2,\ldots ,i_P}  &\label{Pobj} \\
	&\text{s.t.} &  \kappa\cdot \sum_{i_1=1}^{k_1}...\sum_{i_{p-1}=1}^{k_{p-1}}\sum_{i_{p+1}=1}^{k_{p+1}}...\sum_{i_{P}=1}^{k_{P}} x_{i_1,i_2,\ldots ,i_P} = \sum_{i_1=1}^{k_1}...\sum_{i_{p-1}=1}^{k_{p-1}}\sum_{i_{p+1}=1}^{k_{p+1}}...\sum_{i_{P}=1}^{k_{P}}  x'_{i_1,i_2,\ldots ,i_P}    && \nonumber \\
	&&p=1,\ldots ,P \ \  i_p=1,...,k_p&&\label{Pcons:b1} \\
	&& 0\leq  x_{i_1,i_2,\ldots ,i_P} \leq u_{i_1,i_2,\ldots ,i_P}\ \ \ \  p=1,...,P,\ \  i_p=1,...,k_p&& \label{Pcons:t}\\
	&& 0\leq x'_{i_1,i_2,\ldots ,i_P} \leq u'_{i_1,i_2,\ldots ,i_P} \ \ \ \ p=1,...,P,\ \  i_p=1,...,k_p &&  \label{Pcons:c}\\
	&& x_{i_1,i_2,\ldots ,i_P}, x'_{i_1,i_2,\ldots ,i_P}\text{ integers} \ \ \ \  p=1,...,P,\ \  i_p=1,...,k_p &&  \label{Pcons:int}
	\end{align}.
\end{subequations}

Note that this integer programming formulation has $2K$ decision variables and $O(K)$ constraints, and thus the algorithm that constructs it and solves it to optimality runs in $FPT(K)$ time.  The optimal solution for this integer program encodes the optimal solution for $\kappa$-FBS similarly to the proof of theorem \ref{thm:unique}

\subsection{The $\kappa$-BM problem}

Next consider the $\kappa$-BM problem. In \cref{sec:prelim}, we describe an MCNF formulation when the level intersection sizes $s'_{i_1,i_2,\ldots ,i_P}$ for $p=1,...,P$ and $i_p=1,...,k_p$ are given. Observe that if we treat the sizes $s'_{i_1,i_2,\ldots ,i_P}$ for all $p$ and $i_p$ as decision variables, then by enforcing the integrality of these $K$ variables and adding the constraints saying that

\[ \sum_{i_1=1}^{k_1}...\sum_{i_{p-1}=1}^{k_{p-1}}\sum_{i_{p+1}=1}^{k_{p+1}}...\sum_{i_{P}=1}^{k_{P}}  s'_{i_1,i_2,\ldots ,i_P}=\kappa\cdot \ell_{p,i_p} , \quad  i_p=1,...,k_p\ \ p=1,\ldots ,P  \]
forcing the $\kappa$-fine balance constraints
to the MCNF formulation, we get a MILP formulation of $\kappa$-BM with $K$ integral variables.
In fact if we restrict ourselves to a common integral values of these $K$ variables, then the other decision variables are integral as we argue next.  By considering the values of these $K$ integral variables as constants, the resulting linear programming formulation is in fact an MCNF LP formulation whose supply/demand vector depends on the values of these $K$ integral variables. Thus, the optimal solution for the MILP is without loss of generality integral, and even if it does not satisfy this integral requirement it can be transformed to another optimal solution that is integral in polynomial time.

Since the number of variables of the resulting mixed-integer program is at most $n\cdot n'+K$, the number of integer variables is $K$, and the number of constraints is $O(n\cdot n')$, we conclude that the algorithm that formulates this MILP and solves it to optimality guaranteeing that the optimal solution is integral, runs in $FPT(K)$ time.

\subsection{The MSBM and $\kappa$-MSBM problems}
We know from \cref{thm:NPkMSBM} that the $1$-covariate $\kappa$-MSBM problem for $\kappa\geq 3$ is NP-hard already if the unique covariate has only one level.

We consider next the MSBM problem. In \cref{sec:prelim}, we describe an MCNF formulation if all the level intersection sizes $s_{i_1,i_2,\ldots ,i_P}$ and $s'_{i_1,i_2,\ldots ,i_P}$ are given. Observe that if we treat $s_{i_1,i_2,\ldots ,i_P}$ and $s'_{i_1,i_2,\ldots ,i_P}$ as decision variables, then by enforcing the integrality of these $O(K)$ variables and adding the constraints saying that

\[  \sum_{i_1=1}^{k_1}...\sum_{i_{p-1}=1}^{k_{p-1}}\sum_{i_{p+1}=1}^{k_{p+1}}...\sum_{i_{P}=1}^{k_{P}} s_{i_1,i_2,\ldots ,i_P} = \sum_{i_1=1}^{k_1}...\sum_{i_{p-1}=1}^{k_{p-1}}\sum_{i_{p+1}=1}^{k_{p+1}}...\sum_{i_{P}=1}^{k_{P}}  s'_{i_1,i_2,\ldots ,i_P}, \quad  i_p=1,...,k_p\ \ p=1,\ldots ,P  \]
that is the fine balance constraints,
in addition to the constraint saying that the sum over all $s_{i_1,i_2,\ldots ,i_P}$ equals the objective function value of $\kappa$-FBS,
to the MCNF formulation, we get a MILP formulation of $\kappa$-MSBM with $2K$ integral variables.
In fact if we restrict ourselves to a common integral values of these $2K$ variables, then the other decision variables are without loss of generality integral as well as we argue next.  By considering the values of these $2K$ integral variables as constants, the resulting linear programming formulation is in fact a MCNF LP formulation whose supply/demand vector depends on the values of these $2K$ integral variables. Thus, the optimal solution for the MILP is without loss of generality integral, and even if it does not satisfy this integral requirement it can be transformed to another optimal solution that is integral in polynomial time.

Since the number of variables of the resulting mixed-integer program is at most $n\cdot n'+2K$, the number of integer variables is $2K$, and the number of constraints is $O(n\cdot n')$, we conclude that the algorithm that formulates this MILP and solves it to optimality guaranteeing that the optimal solution is integral, runs in $FPT(K)$ time.

\appendix

	\section{The minimum cost network flow}\label{sec:flow}
	We formulate here the {\em minimum cost network flow problem} (MCNF). The input to the problem is a graph $G=(V,A)$ with a set of nodes $V$ and a set of arcs $A$, where each arc $(i,j)\in A$ is associated with a cost $c_{ij}$, capacity upper bound $u_{ij}$, and capacity lower bound $l_{ij}$. Each node $i\in V$ has supply $b_i$ which is interpreted as demand if negative, and can be $0$.   Let $x_{ij}$ be the amount of flow on arc $(i,j)\in A$.  The flow vector $\x$ is said to be feasible if it satisfies:\\
	(1) {\bf Flow balance constraints:}  For every node $k \in V$  $Outflow(k)-Inflow (k) = b_k$\\ 
	(2) {\bf Capacity constraints:}  For each arc $(i,j)\in A$, $l_{ij} \leq x_{ij} \leq u_{ij}$.
	
	The linear programming formulation of the problem is:
	\[
	\hspace{.4in}\begin{array}{ll}
	\mbox{(MCNF)~~~~}  \min & \sum_{(i,j)\in A}\ c_{ij} x_{ij}  \\
	\mbox{subject to }\ & \sum_{j: (k,j)\in A} x_{kj}\,-\,
	\sum_{i: (i,k)\in A} x_{ik} = b_k  \ \forall k \in V \\
	&     l_{ij} \le x_{ij} \le u_{ij}, \  \ \forall (i,j) \in A.
	\end{array}
	\]

	The flow balance constraints coefficients form a $\{ 0,1,-1\}$-matrix where in each column there is exactly one $1$ and one $-1$.  Such matrix is a special case of matrices where each column (or row) has at most one $1$ and at most one $-1$, which are known to be totally unimodular.

\bibliographystyle{siamplain}

\end{document}